\documentclass[journal,draftcls,onecolumn,12pt]{IEEEtran}
\IEEEoverridecommandlockouts
\IEEEoverridecommandlockouts

\usepackage{epstopdf}  
\usepackage{graphicx} 
\usepackage{times} 
\usepackage{amsmath} 
\usepackage{amssymb}  
\usepackage{amsthm}  
\usepackage{bm}  
\usepackage{mathtools}
\usepackage{epstopdf}
\usepackage{cite}
\usepackage[draft]{hyperref}
\usepackage{framed}
\usepackage{multirow}
\usepackage{parskip}
\usepackage{multirow}
\usepackage{etoolbox}
\usepackage{mathptmx}
\usepackage{calrsfs}
\usepackage{lipsum}
\usepackage{subcaption}
\usepackage{relsize}
\usepackage{setspace}


\DeclareMathAlphabet{\pazocal}{OMS}{zplm}{m}{n}

\newcommand{\bigO}{\mathcal{O}}


\newcommand{\sr}{\stackrel}

\newcommand{\rar}{\rightarrow}

\newcommand{\tri}{\sr{\triangle}{=}}

\newcommand{\be}{\begin{equation}}
\newcommand{\ee}{\end{equation}}
\newcommand{\bea}{\begin{eqnarray}}
\newcommand{\eea}{\end{eqnarray}}
\newcommand{\bes}{\begin{eqnarray*}}
\newcommand{\ees}{\end{eqnarray*}}
\newcommand{\bce}{\begin{center}}
\newcommand{\ece}{\end{center}}
\newcommand{\beae}{\begin{IEEEeqnarray}{rCl}}
\newcommand{\beael}{\begin{IEEEeqnarray}{lll}}
\newcommand{\eeae}{\end{IEEEeqnarray}}
\newcommand{\nms}{\IEEEeqnarraynumspace}
\def\VR{\kern-\arraycolsep\strut\vrule &\kern-\arraycolsep}
\def\vr{\kern-\arraycolsep & \kern-\arraycolsep}

\DeclareMathOperator*{\argmin}{arg\,min}

\newcommand{\ben}{\begin{enumerate}}
\newcommand{\een}{\end{enumerate}}



\DeclareRobustCommand{\bigO}{%
  \text{\usefont{OMS}{cmsy}{m}{n}O}%
}

\setlength{\columnsep}{0.2in}



\newtheorem{theorem}{Theorem}
\newtheorem{proposition}{Proposition}

\newtheorem{remark}{Remark}
\newtheorem{corollary}{Corollary}

\newtheorem{definition}{Definition}
\newtheorem{lemma}{Lemma}

\begin{document}

\title{Finite Blocklength Analysis of Multiple Access Channels with/without Cooperation}


%
\author{Christos K. Kourtellaris, Constantinos Psomas, \IEEEmembership{Senior Member, IEEE}, and Ioannis Krikidis, \IEEEmembership{Fellow, IEEE}
	
\thanks{C. K. Kourtellaris, C. Psomas, and I. Krikidis are with the Department of Electrical and Computer Engineering, University of Cyprus, Cyprus (e-mail: \{kourtellaris.christos, psomas, krikidis\}@ucy.ac.cy).
	
Parts of this work were presented at the IEEE International Conference on Communications, Shanghai, China, May 2019 \cite{kourt019icc1,kourt019icc2}.}}


\maketitle

\begin{abstract} 
Motivated by the demand of reliable and low latency communications, we employ tools from information theory, stochastic processes and queueing theory, in order to provide a comprehensive framework regarding the analysis of a Time Division Multiple Access (TDMA) network with bursty traffic, in the finite blocklength regime. Specifically, we re-examine the stability conditions of a non-cooperative TDMA multiple access channel, evaluate the optimal throughput, and identify the optimal trade-off between data packet size and latency. The evaluation is performed both numerically and via the proposed approximations that result in closed form expressions. Then, we examine the stability conditions and the performance of the  Multiple Access Relay Channel with TDMA scheduling, subject to finite blocklength constraints, by applying a cognitive cooperation protocol that assumes relaying is enabled when sources are idle. Finally, we propose the novel Batch-And-Forward (BAF) strategy, that can significantly enhance the performance of cooperative networks in the finite blocklength regime, as well as reduce the requirement in metadata. The BAF strategy is quite versatile, thus, it  can be embedded in existing  cooperative protocols, without imposing additional complexity on the overall scheme. 
\end{abstract}

\begin{IEEEkeywords}
Finite blocklength analysis, network stability, cognitive cooperation, multiple access relay channels, TDMA, bursty traffic model.
\end{IEEEkeywords}

\IEEEpeerreviewmaketitle

\section{Introduction}


Information theory paved the way for the development of communication theory, and evolved over the years to include a wide range of communication applications, such as, compression, coding, and statistics. However, it has fallen short of leaving its  distinct mark in the field of communication networks \cite{ephremides1998, popo2016}, and this confinement is mainly attributed to the asymptotic nature of Information theory. Shannon's definition of channel's capacity requires infinitely large blocklength in order to guarantee arbitrary small probability of error, for all rates below the channel's capacity. Thus,
classical information measures cannot handle realistic scenarios where 
the blocklength is finite. Finite blocklength is also inextricably linked with the requirement  of Ultra-Reliable Low-Latency Communications (URLLC) \cite{itu2015}, which emerged to support a vast family of applications that require the simultaneous consideration of latency and reliability criteria, and is a key factor for many vertical markets, including, autonomous vehicles, remote healthcare and mission critical communications. The majority of these applications will be supported by current and future wireless communication networks.

Fortunately, recent results  \cite{Polyanskiy2010,tan2015,poly2011} provide valuable tools regarding the analysis of communication networks in the finite blocklength regime. These works, among other results provide attractive approximation for the finite blocklength rate $R^*(n,\epsilon)$, at fixed blocklength $n$, and fixed probability of error $\epsilon$.  These results were  applied to address the requirement of low latency from various perspectives, such as, the characterization of finite blocklength rates for various channels \cite{yang2014, yang16},  the  performance evaluation of short length codes \cite{Wonterghem2018}, and the performance analysis of communication protocols \cite{devassy}. On topics related to cooperation in multiple access channels, though there is an extensive literature that spans from the performance analysis \cite{simeone2007} to protocol design  \cite{sadek2007,krikidis2009a}, and from relay selection\cite{nomikos2016} to full-duplex cooperative relaying\cite{pappas2015}, the vast majority of the existing literature regards asymptotic, in terms of blocklength, analysis. Thus, though these techniques can be employed in the context of finite blocklength, they do not necessarily  perform in an optimal manner. 

The purpose of this work is twofold. First, to provide a comprehensive framework regarding the performance analysis of Time Division Multiple Access (TDMA) channels, in the finite blocklength regime. Towards this direction, we apply tools from information theory, stochastic processes and queueing theory, to study the performance of the network, in terms of stability  and optimal throughput, in the finite blocklength regime. Second, to propose schemes that utilize the potentials of finite blocklength analysis in order to overcome  possible limitations of existing schemes, and enhance the performance of the network.

We begin our analysis by revisiting the stability of the non cooperative Time Division Multiple Access (TDMA) network, subject to finite blocklength constraints, and provide expressions for the optimal throughput of the overall network. The analytical evaluation of the throughput involves the Additive White Gaussian Noise (AWGN) Q-function, and since it cannot be integrated in closed form, we provide  approximations of the AWGN Q-function in order (i) to evaluate closed form expressions for the throughput  and (ii) to identify the trade-off between  the size of the data, $k$, and the channel's blocklength, $n$. Subsequently, we extend the results regarding the stability conditions and the optimal throughput to the case of Multiple Access Relay Channel (MARC) with TDMA scheduling. The selected cognitive cooperation protocol   \cite{simeone2007}, which is based on the underlay cognitive radio concept, assumes relaying is enabled when sources are silent (idle). Although the cognitive cooperation protocol may improve the performance of the network, this improvement is disproportionate to the additional complexity and resources that it entails. The reason for the insufficient performance is that existing cooperative protocols are not designed to perform optimally in the finite blocklength regime. Towards this direction, we propose the novel Batch-And-Forward (BAF) strategy that can significantly enhance the performance of  networks that employ short codes. In the BAF strategy, each terminal is allowed to batch $L$ data packets of length $k$, into a single codeword of fixed length $n$. By employing tools and results from batch queue theory, we provide the stability conditions and identify expressions for the optimal throughput of the overall system. Then, by optimizing over the batching size, $L$, we can significantly enhance the performance of the overall network.

The paper makes the following contributions:\begin{itemize}
\item[i)] We characterize the stability region of the TDMA network subject to finite blocklength constraints. We investigate the concavity properties of the throughput, and evaluate the optimal throughput and the optimal trade-off between data packet size and latency. The evaluation is performed both numerically and via the proposed approximations that result in closed form expressions.
\item[ii)] We characterize the stability region and the optimal throughput  of the MARC-TDMA network subject to finite blocklength constraints, for a particular cognitive cooperation protocol. 
\item[iii)] We propose the BAF strategy which can  improve the performance of the network, in the finite blocklength regime.  We embed this strategy in the discussed cognitive cooperation protocol, where we identify the stability conditions and the expression of the optimal throughput. Then, we show via numerical evaluation that the overall performance is significantly enhanced. Although the performance is evaluated for a particular cooperative protocol, the proposed strategy is quite versatile, thus, it can be embedded in the majority of existing cooperative techniques, without imposing additional complexity.
\end{itemize}

The remainder of this paper is organized as follows. In Section \ref{sec:prelim}, we briefly review the recent results in finite blocklength analysis. In Section \ref{sec:netstab}, we describe the system model and the underlying assumptions, prove the stability conditions for the overall queueing system in the finite blocklength regime, and evaluate the overall  throughput and the optimal trade-off between data length and channel's blocklength. In Section  \ref{sec:coop}, we examine the cooperation in the finite blocklength regime, and in Section \ref{sec:albaf}, we discuss the proposed BAF strategy  and provide numerical evaluation of its performance.

\section{Preliminaries on finite blocklength analysis}\label{sec:prelim}
Let $X$ denote the channel input symbol, $Y$ the channel output symbol and $P_{Y|X}(y|x)$ the conditional distribution of the output given the input. Given a memoryless channel characterized by a conditional distribution $P_{Y|X}(y|x)$, its capacity is given by Shannon's celebrated single letter expression
 \bea
C&=&\max_{p_X(x)}I(X;Y)=\max_{p_X(x)}E\left[i(x;y)\right], 
\eea
where  $E$ is the expectation with respect to the joint distribution $p_{X,Y}(x,y)$, $I(X;Y)$ is the mutual information between the random variable $X$ and the random variable $Y$,
and $i(x;y)\tri\left\{\log\frac{P_{Y|X}(y|x)}{P_Y(y)}\right\}$ is the information density.

Shannon's capacity, has a natural operational definition that associates the rate of information and the reliability, that is, the highest coding rate, in which there exist an encoder-decoder pair that achieve arbitrary small probability of error. The error probability itself is shown to vanish asymptotically with the length of the code, as long as the transmission rate is below capacity. By denoting the optimal rate for fixed blocklength $n$ as $R^*(n,\epsilon)$, and block error probability as $\epsilon$, Shannon's capacity may be redefined as follow.
\begin{equation}
C=\lim_{n\rar\infty}\lim_{\epsilon\rar 0} R^*(n,\epsilon).
\end{equation}
Shannon's capacity has a tremendous theoretical value, however, the prerequisite of infinite length codes severely limits its practical usability. This limitation becomes even more critical for communication applications where low latency is imperative. The above challenge can be addressed via the optimal fixed blocklength rate, $R^*(n,\epsilon)$, which eliminates the necessity of infinitely large codes imposed directly by the definition of  capacity. While, in general, $R^*(n,\epsilon)$ is an NP-hard problem \cite{costa2010,popo2016}, the recent work of Polyanskiy, Poor and Vedru \cite{Polyanskiy2010}, among others, refines Strassen's normal approximation of  $R^*(n,\epsilon)$ \cite{Strassen}, and provides an attractive expression for it. In particular, they proved that for a class of channel models with positive capacity, $C$, $R^*(n,\epsilon)$ is given by
\begin{equation}
R^*(n,\epsilon)= C-\sqrt{\dfrac{V}{n}}Q^{-1}(\epsilon)+\bigO\left(\dfrac{\log{n}}{n}\right),
\end{equation}
where $C$ is the ergodic capacity, $V$ is the channel's dispersion, which is by definition the minimum variance of information density over all capacity achieving input distributions \cite{Polyanskiy2010},  $Q^{-1}(\cdot)$ is the inverse of the Gaussian Q-function and $\bigO
({\log{n}}/{n})$ comprises of the higher order terms. For the AWGN channel, the channel's capacity and dispersion are given by 
\bea
C&=&\dfrac{1}{2}\log_2(1+SNR),
\eea
and
\bea
V&=&\dfrac{SNR}{2}\dfrac{SNR+2}{(SNR+1)^2}(\log_2e)^2,
\eea
respectively, where $SNR$ denotes the signal to noise ratio, while the finite blocklength rate subject to equal-power constraint is approximated by
\bea
R^*(n,\epsilon)\approx  C-\sqrt{\dfrac{V}{n}}Q^{-1}(\epsilon). \label{eq:finiterate}
\eea
Substituting $R^*(n,\epsilon)=\frac{k}{n}$, where $k$ denotes the size of the data packet, and solving with respect to the block error probability $\epsilon$, we obtain
\bea
\epsilon(k,n)\approx Q\left(\dfrac{nC-k}{\sqrt{nV}}\right).
\eea
The probability of successful transmission for a code of blocklength $n$, $P_c(k,n)$, is the cumulative distribution function (cdf) of the normal distribution, and it is expressed as
\bea
P_c(k,n)=1-\epsilon(k,n)\approx\dfrac{1}{\sqrt{2\pi}}\int_{-\infty}^{\frac{nC-k}{\sqrt{nV}}}e^{-\frac{z^2}{2}}d{z}. \label{eq:probsucc}
\eea
The recent work in \cite{tan2015}, refined the approximation given in \eqref{eq:finiterate}, by providing the third order term in the normal approximation for the AWGN channel, that resulted in the following expression
\bea
R^*(n,\epsilon)&\approx&  C-\sqrt{\dfrac{V}{n}}Q^{-1}(\epsilon)+\dfrac{\log_2(n)}{2n}, \label{eq:finiteratenew}
\eea
for the finite blocklength rate, and in the following expression
\bea
P_c(k,n)&\approx&1-Q\left(\dfrac{nC-k+0.5\log_2{n}}{\sqrt{nV}}\right),\label{eq:finiteratenewpc}
\eea
for the successful transmission probability.

\section{Stability for the non coperative scheme on the finite blocklength regime}\label{sec:netstab}
In this section, we characterize the stability region and evaluate the performance of the  TDMA scheme, in the finite blocklength regime.  Moreover, we evaluate the optimal throughput and the trade-off between data size and blocklength, both numerically and via the proposed approximations.

\begin{figure}
\center
  \includegraphics[width=0.7\linewidth]{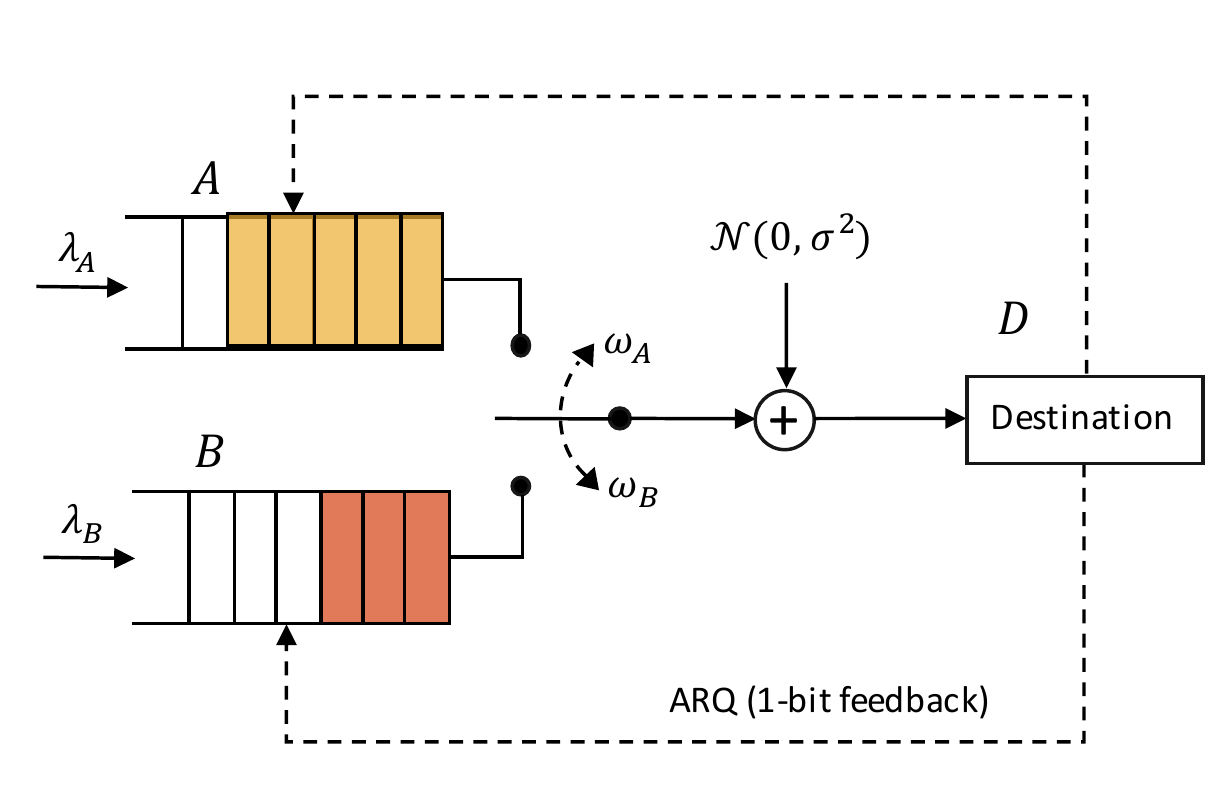}
  \caption{Model of a TDMA network with ACK/NACK feedback.}
  \label{fig:model1}
\end{figure}

\subsection{System Model}

We consider a model with two source terminals, $A$ and $B$, with infinite buffer memories, and a single destination node $D$, as depicted in Fig. \ref{fig:model1}. At each time slot, data packets of length $k_i, \ i\in\{A,B\}$, arrive at the source terminal $i\in\{A,B\}$, according to a Bernoulli distribution with probability $p_i$. The expected value of arrivals at each time slot  is $\lambda_i=p_i, \ \forall  i\in\{A,B\}$. The terminals then encode the data packet  into a codeword of length $n$, and access the channel through a TDMA scheduling with probability $\omega_i$, where $0\leq\omega_i\leq 1$, and $\omega_A+\omega_B=1$ \cite{krikidis2011}. We assume that at each  time slot,  $n$ channel uses  are employed and solely allocated to source terminal $i$, with probability $\omega_i$. The channel is an AWGN channel with zero mean and variance $\sigma^2$. The destination, after receiving and decoding the codeword, sends  Acknowledgement/Negative-Acknowledgement (ACK/NACK) back to the respective source terminal, to inform it about the status of the transmission. In the case of a correct transmission, the respective source terminal discards the data packet from its buffer memory. In the opposite case, the data packet remains in the buffer memory and waits for the next available time slot for retransmission.

The probability of an erroneous transmission for a packet, generated by terminal $i$ at a given time slot, is denoted by $P_{e}(k_i,n)$. The service (departure) process is Bernoulli distributed with probability $q_i=\omega_i(1-P_{e}(k_i,n))$.  Since both the arrivals and departures are Bernoulli distributed, the time of an arrival and the time for a departure to occur, measured in slots, is characterized by a geometric distribution. The system at each terminal $i\in\{A,B\}$ can be described by a  discrete time Markov process with states $\{S_j, j\geq 0\}$, which denote the number of packets in the system. 

\subsection{Stability Conditions and Optimal Throughput}

Our first objective is to study the maximum rate that can be supported by the network. Towards this direction, we prove that network stability is possible, if and only if, the overall rate of the system is less than the throughput.
\begin{theorem}\label{theo:main1}
The TDMA network is stable, if and only if, the following conditions hold
\bea
 {\lambda_i}&<&\omega_i P_{c}(k_i,n),  \ \ \forall i\in\{A,B\}, \  \label{eq:throughputeq0}\\
 {\lambda_A}+ {\lambda_B}&<&\omega_A P_{c}(k_A,n)+\omega_B P_{c}(k_B,n). \label{eq:throughputeqd0}
\eea
\end{theorem}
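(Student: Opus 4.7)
The plan is to exploit the fact that, under the TDMA protocol described, the two source queues evolve as \emph{independent} discrete-time Markov chains, because the scheduler picks terminal $i$ with a fixed probability $\omega_i$ that does not depend on either buffer's state. Consequently, joint stability of the network reduces to the marginal stability of each queue, and the proof reduces to a standard Geo/Geo/1-type birth–death analysis applied twice.

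First I would fix $i\in\{A,B\}$ and describe the buffer occupancy $\{S_j^{(i)}\}_{j\geq 0}$ as a birth–death chain on $\mathbb{N}_0$ whose one-step transitions, at any state $s\geq 1$, are $+1$ with probability $\lambda_i(1-q_i)$, $-1$ with probability $q_i(1-\lambda_i)$, and $0$ otherwise, where the effective service rate is $q_i=\omega_i P_c(k_i,n)$ by the independence of the scheduler and the Bernoulli correct-decoding event with success probability $P_c(k_i,n)$ given in \eqref{eq:probsucc}. For sufficiency, I would apply the Foster–Lyapunov drift criterion with the linear test function $V(s)=s$: the conditional drift $E[V(S^{(i)}_{j+1})-V(S^{(i)}_{j})\mid S^{(i)}_{j}=s]$ equals $\lambda_i-q_i$ for every $s\geq 1$, so the chain is positive recurrent (hence stable) whenever $\lambda_i<\omega_i P_c(k_i,n)$. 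For necessity, I would note that if $\lambda_i\geq q_i$, the cumulative net input $\sum_{t\leq j}(A_t^{(i)}-D_t^{(i)})$ has non-negative expected increment, and by the strong law of large numbers $S^{(i)}_j\to\infty$ almost surely, ruling out a proper stationary distribution.

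Combining the two marginal analyses gives \eqref{eq:throughputeq0}; summing the two marginal inequalities then yields the aggregate throughput bound \eqref{eq:throughputeqd0}, which, while implied by the pair in \eqref{eq:throughputeq0}, expresses the total admissible arrival rate as the sum of per-user finite-blocklength service rates. The main obstacle — and really the only subtle point — is rigorously justifying the decoupling of the two queues: because the TDMA access indicators $(\mathbf{1}\{\text{slot }j\text{ goes to }i\})_{j\geq 0}$ are i.i.d.\ Bernoulli$(\omega_i)$ sequences that are independent of the arrival streams and of the buffer contents, the joint chain $(S^{(A)}_j,S^{(B)}_j)$ factorises into a product of two independent birth–death chains, and joint ergodicity is equivalent to marginal ergodicity. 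Had the scheduler been state-dependent (e.g.\ serving only non-empty queues) a dominant-system or Loynes-type construction would be needed to decouple them, but under the protocol assumed in the system model this step is immediate.
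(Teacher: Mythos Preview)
Your approach is sound and reaches the same stability condition as the paper, but via different tools. The paper's proof writes down the global balance equations of the $Geo/Geo/1$ chain at each terminal, solves them explicitly to obtain $\pi_{i,j}=\bigl(\tfrac{p_i(1-q_i)}{q_i(1-p_i)}\bigr)^{j}\tfrac{1}{1-q_i}\pi_{i,0}$, and then reads off the condition $p_i<q_i=\omega_iP_c(k_i,n)$ from the requirement that the geometric series be summable; summing over $i$ yields \eqref{eq:throughputeqd0}. Your Foster--Lyapunov/SLLN route replaces this explicit computation with a drift test for sufficiency and a law-of-large-numbers divergence argument for necessity. Both arguments are standard for $Geo/Geo/1$; yours is more portable (it would extend to queues without closed-form stationary laws), while the paper's gives the idle probability $\pi_{i,0}=(q_i-p_i)/q_i$ for free, and that quantity is reused later in the relay analysis of Section~\ref{sec:coop}.

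One correction is worth making: the two queues are \emph{not} independent, because the TDMA scheduler picks exactly one terminal per slot, so $\mathbf{1}\{\text{slot }j\to A\}+\mathbf{1}\{\text{slot }j\to B\}=1$ and the access indicators are perfectly anti-correlated; the joint chain therefore does not factorise. What is true, and what both proofs actually rely on, is that each marginal process $\{S_j^{(i)}\}$ is a Markov chain in its own right, since its one-step transition depends only on its own state together with i.i.d.\ coin flips (scheduler, arrival, decoding success) that are independent of both buffers. That is enough to run the per-queue analysis separately; your independence/factorisation claim is stronger than needed and not correct as stated, though dropping it does not damage the rest of your argument.
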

\begin{proof}
 See Appendix \ref{appendix:theom1}.
\end{proof}

The following corollary is a straightforward consequence of Theorem~\ref{theo:main1}.

\begin{corollary}\label{theo:main1old}
Let $X(k,n)\tri\frac{k}{n}(\lambda_A+\lambda_B)$ denote the rate of the  scheme,  $u(k,n)\tri\frac{k}{n}P_c(k,n)$ denote the overall throughput of the  scheme\cite{Polyanskiy2010}, and $k_A=k_B=k$. Then,

\bea  
X(k,n)\tri(\lambda_A+\lambda_B)\dfrac{k}{n}<\dfrac{k}{n}P_c(k,n)\tri u(k,n). \label{eq:throughputeq}
\eea
\end{corollary}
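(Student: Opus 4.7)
The plan is to derive the claim directly from the sum-rate stability condition in Theorem~\ref{theo:main1}, so the argument will be essentially a one-line algebraic manipulation rather than a proper ``proof'' in the analytic sense. Concretely, I would start from inequality \eqref{eq:throughputeqd0},
\begin{equation*}
\lambda_A + \lambda_B < \omega_A P_c(k_A,n) + \omega_B P_c(k_B,n),
\end{equation*}
which must hold whenever the TDMA network is stable.

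Next, I would specialize to the symmetric packet-length setting by substituting $k_A = k_B = k$, so that $P_c(k_A,n) = P_c(k_B,n) = P_c(k,n)$. Using the TDMA constraint $\omega_A + \omega_B = 1$ stated in the system model, the right-hand side collapses to
\begin{equation*}
\omega_A P_c(k,n) + \omega_B P_c(k,n) = (\omega_A + \omega_B) P_c(k,n) = P_c(k,n),
\end{equation*}
and hence $\lambda_A + \lambda_B < P_c(k,n)$. The per-source conditions \eqref{eq:throughputeq0} are automatically implied here by $\lambda_i < \omega_i P_c(k,n)$ and are not needed for the sum inequality.

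Finally, since $k/n > 0$, multiplying both sides of $\lambda_A + \lambda_B < P_c(k,n)$ by $k/n$ preserves the inequality and yields
\begin{equation*}
\frac{k}{n}(\lambda_A + \lambda_B) < \frac{k}{n} P_c(k,n),
\end{equation*}
which is precisely $X(k,n) < u(k,n)$ by the definitions given in the corollary. There is no real obstacle to overcome: the only subtlety worth flagging is that the result genuinely needs both $k_A = k_B$ and $\omega_A + \omega_B = 1$ so that the weighted sum $\omega_A P_c(k_A,n) + \omega_B P_c(k_B,n)$ reduces to the single quantity $P_c(k,n)$; without symmetry in $k$, one would only obtain a convex combination on the right and the clean identification with the throughput $u(k,n)$ would fail.
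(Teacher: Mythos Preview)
Your argument is correct and essentially identical to the paper's: the paper starts from the per-source inequalities \eqref{eq:throughputeq0}, sums them, uses $\omega_A+\omega_B=1$, and multiplies by $k/n$, whereas you start directly from the already-summed inequality \eqref{eq:throughputeqd0}, which is the same manipulation with one step skipped.
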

\begin{proof}
 For the special case where $k_A=k_B=k$, then $P_{c,A}(k_A,n)=P_{c,B}(k_B,n)=P_{c}(k,n)$, thus from \eqref{eq:throughputeq0}, we have
\begin{equation}
 {\lambda_i}<\omega_i P_{c}(k,n),  \ \ \forall i\in\{A,B\}.
 \end{equation}
Since $\omega_A+\omega_B=1$, the stability for the overall scheme consisted of the two terminals $A$ and $B$, is calculated, as follow
\begin{equation}
\lambda_A+\lambda_B<(\omega_A+\omega_B)P_{c}(k,n)=P_{c}(k,n). \label{eq:stab_case1}
\end{equation}

Multiplying both sides of \eqref{eq:stab_case1} with $\frac{k}{n}$, we obtain \eqref{eq:throughputeq}.
\end{proof}

The assumption $k_A=k_B=k$ is imposed to keep the notation clean. However,  the general case of different data packet size can emerge directly by employing the proposed analysis.

Next, we employ Corollary \ref{theo:main1}, to recast the classical problem of maximizing the overall rate of the network by imposing a blocklength (latency) constraint. That is, given a channel and a fixed blocklength  $n$, we ask what is the optimal size of the data packets that maximizes the  rate. 
For the rest of this work, we will consider the case where the size of the data arriving at the two terminals is identical, that is, $k_A=k_B=k$, investigate the impact of the blocklength, $n$, on the throughput, and provide numerical evaluation and closed-form approximations for the throughput.

As proved in Corollary~\ref{theo:main1}, the overall rate that guarantees stability can be arbitrary close to the throughput of the system. Thus, to maximize rate, we need to identify the optimal value of $k$ that maximizes $u(k,n)$. The resulted optimization problem is given by
\bea
u^{*}(k,n)=\max_{k}\dfrac{k}{n}P_c(k,n). \label{optim_pro1}
\eea
Before we proceed to the solution of the above optimization problem, we investigate the convexity properties of the objective function $u(k,n)$. Towards this direction, we state the necessary definition of log-concavity and a lemma which highlights an important property of log-concave functions.

\begin{definition}\label{def:logconc}
A function $f: \mathbb{R}^n \mapsto \mathbb{R}$ is  log-concave if $f(x) > 0 \ \forall \ x$, and $\log{f}$ is concave.\end{definition}

\begin{lemma}\label{lemma:multipli}
Log-concavity is closed under multiplication, that is, if $f$ and $g$ are log-concave, the pointwise product is also log-concave \cite[Section~3.5]{boyd2004}.
\end{lemma}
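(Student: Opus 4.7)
The plan is to reduce the claim to the elementary fact that the sum of two concave functions is concave, using the additive property of the logarithm.

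First I would note that by Definition~\ref{def:logconc}, log-concavity of $f$ and $g$ means that $f(x)>0$ and $g(x)>0$ for all $x$, and that both $\log f$ and $\log g$ are concave on $\mathbb{R}^n$. The pointwise product $h \triangleq fg$ is therefore strictly positive everywhere, so the quantity $\log h$ is well-defined on all of $\mathbb{R}^n$, which verifies the positivity half of the log-concavity definition for $h$.

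Next I would exploit the identity $\log h(x) = \log f(x) + \log g(x)$, which holds pointwise because both factors are positive. It is a standard fact that the sum of two concave functions is concave: if $\phi$ and $\psi$ are concave and $\theta \in [0,1]$, then for any $x,y$,
\begin{equation}
(\phi+\psi)(\theta x+(1-\theta)y) \geq \theta(\phi+\psi)(x) + (1-\theta)(\phi+\psi)(y),
\end{equation}
obtained by adding the concavity inequalities for $\phi$ and $\psi$. Applying this with $\phi=\log f$ and $\psi=\log g$ shows that $\log h$ is concave, which together with positivity yields log-concavity of $h=fg$.

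Since the argument is this short and rests only on the definitional identity $\log(fg)=\log f+\log g$ together with the additivity of concavity, there is no real obstacle; the only thing to be careful about is to verify both clauses of Definition~\ref{def:logconc} (positivity and concavity of the log), rather than only the concavity part. For completeness I would also remark that the same argument extends by induction to any finite product of log-concave functions, a fact that will be useful when applying the lemma to the throughput expression $u(k,n)=\tfrac{k}{n}P_c(k,n)$.
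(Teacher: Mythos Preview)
Your argument is correct and is exactly the standard proof: positivity of $fg$ is immediate, and $\log(fg)=\log f+\log g$ reduces the claim to the fact that a sum of concave functions is concave. You were also right to check both clauses of Definition~\ref{def:logconc}.

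The only thing to note is that the paper does not actually supply its own proof of this lemma; it simply states the result and cites \cite[Section~3.5]{boyd2004}. So there is nothing to compare against --- your write-up is more detailed than what appears in the paper, but entirely in line with the intended argument.
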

We now state the theorem regarding the log-concavity of the objective function $u(k,n)$.

\begin{theorem}\label{theo:concavity}
For any fixed $n>1$, $u(k,n)$ is log-concave function of $k$.
\end{theorem}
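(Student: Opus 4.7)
The plan is to factor $u(k,n) = (k/n)\cdot P_c(k,n)$ and invoke Lemma~\ref{lemma:multipli}: provided both factors are log-concave functions of $k$ on the relevant domain $k>0$, their pointwise product is log-concave, which is exactly the statement to be proved.

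The first factor is immediate, since $\log(k/n) = \log k - \log n$ has second derivative $-1/k^{2} < 0$ for $k>0$. For the second factor, I would use the normal approximation recalled in Section~\ref{sec:prelim},
\begin{equation}
P_c(k,n) \;\approx\; \Phi\!\left(\frac{nC - k}{\sqrt{nV}}\right),
\end{equation}
where $\Phi$ is the standard Gaussian CDF (this is just~\eqref{eq:probsucc} rewritten using $1-Q = \Phi$; the refined expression~\eqref{eq:finiteratenewpc} differs only by an additive constant in the argument and is handled identically). The argument is affine in $k$, and log-concavity is preserved under affine reparametrization of the input (since the composition of a concave function with an affine map is concave). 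Hence it suffices to establish log-concavity of $\Phi$ itself.

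The main substantive step is therefore the log-concavity of $\Phi$, which I would argue as follows. The standard Gaussian density satisfies $\log\phi(x) = -x^{2}/2 - \tfrac{1}{2}\log(2\pi)$, a concave quadratic, so $\phi$ is log-concave; the integral of a log-concave density over a half-line is again log-concave by Pr\'ekopa's theorem, giving log-concavity of $\Phi$. An elementary alternative that avoids citing Pr\'ekopa is to differentiate $\log\Phi$ twice: the inequality $(\log\Phi)''(x)\le 0$ reduces, after using $\phi'(x)=-x\phi(x)$, to $\phi(x) + x\,\Phi(x)\ge 0$, which is trivial for $x\ge 0$ and follows for $x<0$ from the standard Mills-ratio bound $\Phi(x)\le \phi(x)/|x|$. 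With $\Phi$ log-concave, composition with the affine map in $k$ gives log-concavity of $P_c(k,n)$, and Lemma~\ref{lemma:multipli} then combines the two factors into the log-concavity of $u(k,n)$. I expect no real obstacle beyond invoking the log-concavity of $\Phi$, which is a classical result.
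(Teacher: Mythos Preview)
Your proposal is correct and follows essentially the same approach as the paper: factor $u(k,n)=(k/n)\cdot P_c(k,n)$, observe that $k/n$ is log-concave and that $P_c(k,n)$ is the Gaussian cdf composed with an affine map (hence log-concave), and conclude via Lemma~\ref{lemma:multipli}. The paper simply cites the log-concavity of the normal cdf, whereas you supply two explicit justifications (Pr\'ekopa's theorem and the Mills-ratio computation); this is additional detail rather than a different route.
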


\begin{proof}
Let $f(k)=\frac{k}{n}$  and $h(k)=P_c(k,n)$. The objective function can be rewritten as $u(k)=f(k)h(k)$. By Definition \ref{def:logconc}, $f(k)$ is log-concave since $f(k)>0$ and  $\log{f(k)}$ is concave. The function $h(k)$ is by definition the cdf of a normal distribution, which is shown to be log-concave \cite[Section~3.5]{boyd2004}.  Since both the functions $f(k)$ and $h(k)$ are log-concave, then by Lemma \ref{lemma:multipli}, the function $u(k)$ is also log-concave.
\end{proof}

By virtue of Theorem \ref{theo:concavity},  $u(k,n)$ is unimodal, that is, there are no local maxima that are non-global ones. This property eliminates the risk for the optimization algorithm getting trapped into a local maxima that is not global. Moreover, log-concavity allows transforming the original optimization problem into a convex optimization problem, that inherits all useful properties and tools of convex optimization.

Unfortunately, no closed form solutions can emerge from the optimization problem \eqref{optim_pro1}, since no explicit expression is known for $P_c(k,n)$. To overcome this problem, we capitalize the properties of the objective function, $u(k,n)$, and provide numerical evaluation of the optimal value of $k$ via exhaustive search. Additionally, we propose first order and second order approximations of $P_c(k,n)$, which are applied in order to evaluate closed form approximations of the optimal data packet size, $k^*$, and the optimal throughput, $u^{*}(k,n)$, with a view to identify the optimal trade-off between the optimal size of the data packet, $k$, and the blocklength $n$. 

\begin{remark}
In our analysis, we do not address the issue of control signals (metadata), which are necessary, inter alia, for the error detecting schemes required for the ACK/NACK protocol. Thus, the results of this work should be interpreted in the light of this consideration. This is translated as a genie aided destination \cite{steger2008,kumar2009}, that can identify possible errors, and requests, or does not request, data retransmission. 
\end{remark}

The optimal solution of the optimization problem \eqref{optim_pro1} can be found via exhaustive search over all possible values of $k\geq 1$. This approach is computationally efficient due to the log-concavity  of $u(k,n)$, which results to a unique global maxima.
The exhaustive search algorithm simply compares the objective function, $u(k,n)$, for successive values of $k$, and terminates the search when $u(k=i+1,n)<u(k=i,n), \ i\in[1,\infty)$. Then, the optimal solution is given by, $k^*=i$. By substituting the value of $k^*$ in \eqref{optim_pro1}, we obtain the value of the throughput.

The analytical evaluation of the throughput involves the AWGN Q-function, and since it cannot be integrated in closed form, 
tight approximations should be employed in order to evaluate closed form expressions for the throughput and for the trade-off between  the size of the data, $k$, and the channel's blocklength, $n$. Despite the significant work on approximations of the Gaussian Q-function (see \cite{karagiannidis2007} and references within), these  
cannot be employed to provide closed form expressions of the throughput, due to their complex structure. Towards this direction, we   propose linear and quadratic approximations on the probability of successful transmission, that result in closed form expressions.

\begin{remark}\label{rem:why}
It has been observed, via numerical evaluation of the  throughput, that the approximation given in \eqref{eq:finiteratenewpc}, though tighter than  \eqref{eq:probsucc} for relatively large blocklength, $ n>10^3$, may produce inconsistent results for very small blocklengths, $n<10^2$ (the approximated rates are greater than channel's capacity). This observation holds especially for small values of SNR ($SNR<1$). Thus, we employ the pessimistic expression \eqref{eq:probsucc} rather than \eqref{eq:finiteratenewpc}. Nevertheless, the proposed methodology and results can be straightforwardly extended to any possible expression of $P_c(k,n)$. 
\end{remark}

\subsection{Q-function Approximations}\label{sec:throueva}
\subsubsection{Linear}
Linear approximations, though not the tightest, are attractive since they provide simple expressions that can be physically interpreted. Recent works on topics related to finite blocklength analysis employ such approximations, for the finite blocklength analysis of the incremental redundancy Hybrid ARQ (HARQ) \cite{makki2014} and for full-duplex and half-duplex relaying for short packet communications \cite{gu2017}. Let, the linear approximation of the probability of successful transmission be denoted by ${\hat{P}_c}(k,n)$, and the resulting approximations of the throughput and of the  data packet size be denoted by  ${\hat u}(k,n)$ and ${\hat k}$, respectively.

\begin{figure}
\center
\begin{minipage}{1\textwidth}
\center
\hspace{-0.8cm}
  \includegraphics[width=0.8\linewidth]{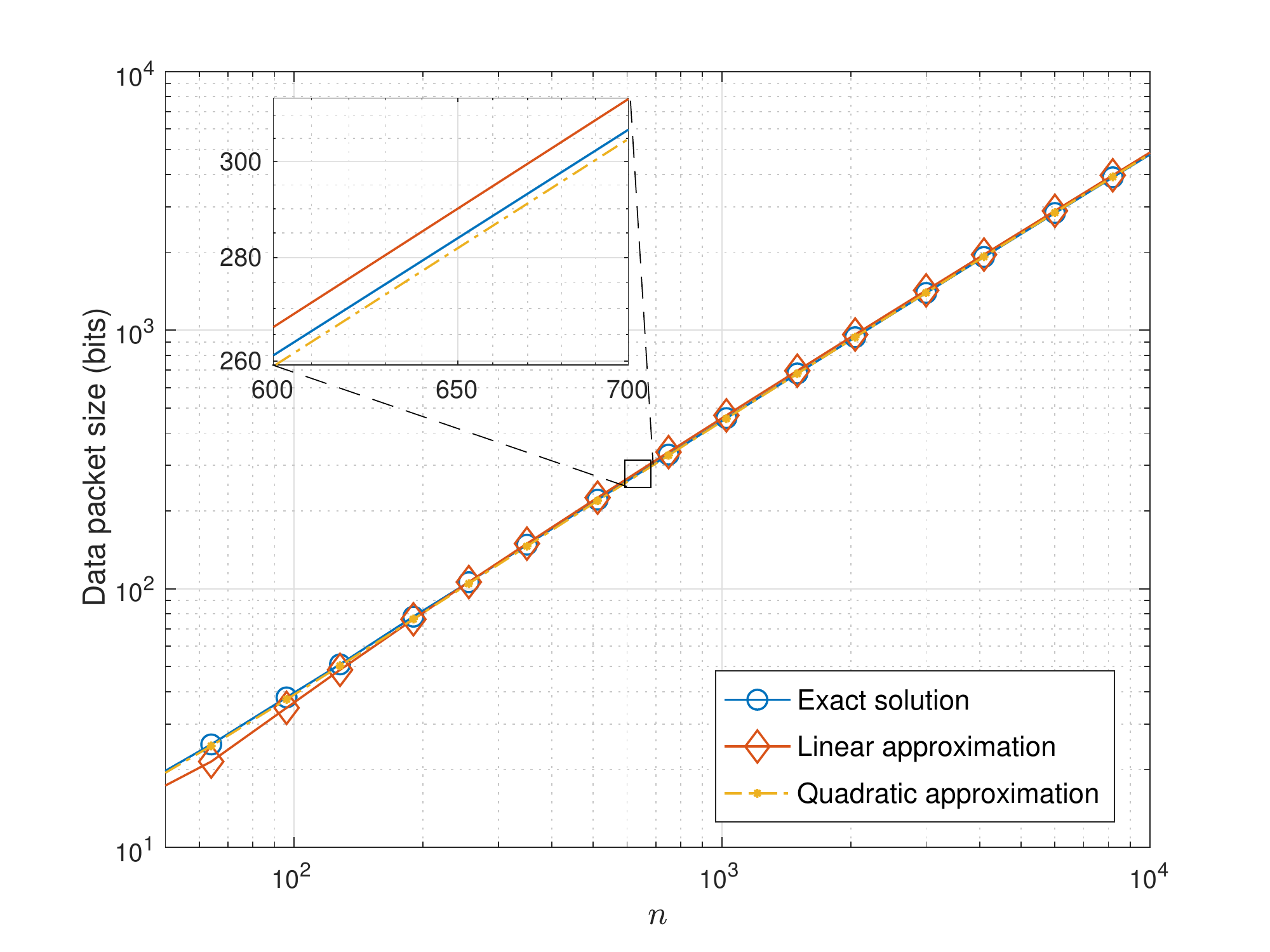}
  \caption{Optimal size of data packets as a function of channel's blocklength, $n$, and comparison with the 
expressions resulted from the 
linear and quadratic approximation of $P_c(k,n)$, for $SNR=1$.}
  \label{fig:compklq}
\end{minipage}%
\hspace{0.4cm}
\begin{minipage}{1\textwidth}
\center
\vspace{-0.5cm}
\hspace{-0.5cm}
 \includegraphics[width=0.8\linewidth]{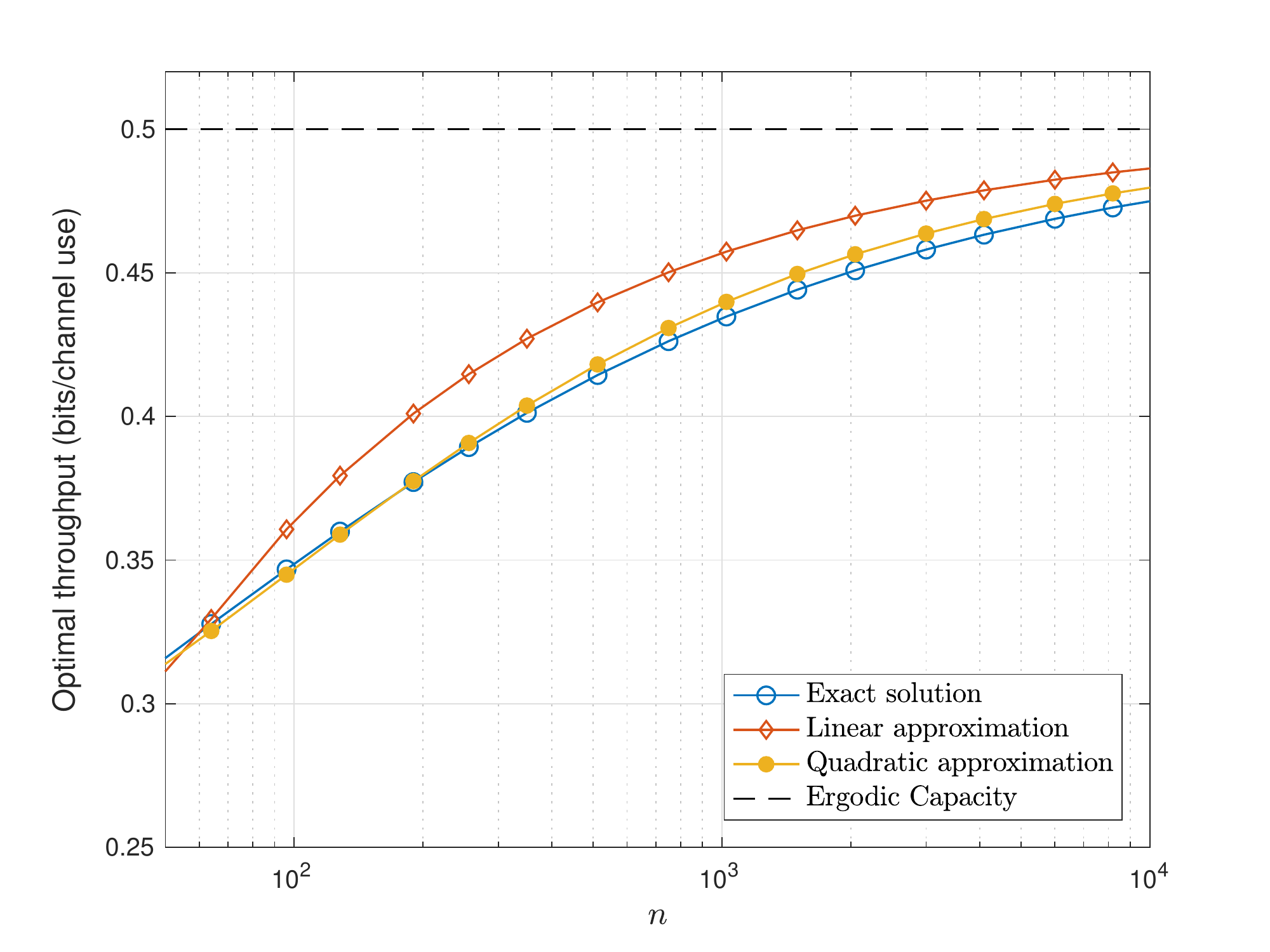}
\caption{ Optimal throughput, $u^{*}(k,n)$, and comparison with the expressions resulted from the 
linear and quadratic approximation of $P_c(k,n)$, for $SNR=1$.}
  \label{fig:boundex}
\end{minipage}
\end{figure}

%

\begin{proposition}\label{prop:lin}
The proposed linear approximation is given by
\begin{equation}
  {\hat{P}_c}(k,n) =
  \begin{cases}
    1 & \text{if $\chi \geq \delta_1$}, \\
    \dfrac{1}{2\delta_1}\chi+\delta_0& \text{if $-\delta_1 \leq \chi < \delta_1$}, \\
    0 & \text{if $\chi <-\delta_1$},
  \end{cases}
\end{equation}
where $\delta_0=0.5$, $\delta_1=1.545$, and 
\beae
\chi&=&\dfrac{nC-k}{\sqrt{nV}}.\label{eq:valueofx}
\eeae
The optimal size of the data packet is given by
\beae
{\hat k}^{*}&=&
  \begin{cases}
    nC-1.545\sqrt{nV} & \text{if $n\geq\dfrac{9{\delta_1}^2V}{C^2}$} \vspace{0.2cm}, \\
    0.5\left(Cn+1.545\sqrt{nV}\right) & \text{if $0< n<\dfrac{9{\delta_1}^2V}{C^2}$}, \label{eq:finaloptsollin}
  \end{cases} 
\eeae
and the optimal value of the throughput is obtained by substituting ${\hat k}^{*}$ in 
\bea
{\hat{u}}^{*}(k,n)=\dfrac{{\hat k}^{*}}{n}{\hat{P}_c}({\hat k}^{*},n).
\eea
\end{proposition}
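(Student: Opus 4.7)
The plan is to substitute the piecewise-linear expression $\hat P_c(k,n)$ into $\hat u(k,n)=\frac{k}{n}\hat P_c(k,n)$ and optimize within each of the three regions delimited by the break points $\chi=\pm\delta_1$. Since $\hat P_c$ is continuous and piecewise polynomial in $k$, the global maximum is either an interior stationary point of one of the pieces or a junction between two pieces, so the argument reduces to a finite comparison.

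In the saturated regime $\chi\geq\delta_1$, equivalently $k\leq nC-\delta_1\sqrt{nV}$, one has $\hat P_c=1$, so $\hat u=k/n$ is strictly increasing and the best candidate in this region is the right endpoint $k=nC-\delta_1\sqrt{nV}$. In the vanishing regime $\chi<-\delta_1$ the throughput is identically zero and may be discarded. In the transition regime $-\delta_1\leq\chi<\delta_1$, substituting the linear formula gives
\[
\hat u(k,n)=\frac{k(nC-k)}{2\delta_1\,n\sqrt{nV}}+\frac{k}{2n},
\]
a concave quadratic in $k$ whose derivative vanishes at the unique interior point $\tilde k=\tfrac{1}{2}\bigl(nC+\delta_1\sqrt{nV}\bigr)$.

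It then remains to decide whether $\tilde k$ actually lies in the transition interval $\bigl[nC-\delta_1\sqrt{nV},\,nC+\delta_1\sqrt{nV}\bigr]$. The upper bound is automatic, while a direct comparison shows $\tilde k>nC-\delta_1\sqrt{nV}$ if and only if $n<9\delta_1^{2}V/C^{2}$. Hence for $n<9\delta_1^{2}V/C^{2}$ the interior stationary point $\tilde k$ is the transition-piece maximum and dominates all other candidates, yielding the second case of the claim. For $n\geq 9\delta_1^{2}V/C^{2}$ the quadratic is monotonically decreasing on the transition interval, so its supremum is attained at the left endpoint $nC-\delta_1\sqrt{nV}$, which coincides with the saturated-piece boundary value; this yields the first case. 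Substituting $\hat k^{*}$ back into $\hat u(k,n)$ produces the closed-form throughput.

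I do not foresee a substantial obstacle: the argument is a routine piecewise optimization of a function that is linear on one piece, concave quadratic on another, and continuous across junctions. The only care required is identifying the threshold $n=9\delta_1^{2}V/C^{2}$ that decides which regime carries the optimum, and checking continuity of $\hat u$ at the junction $\chi=\delta_1$ so that comparing the two interior candidates correctly pinpoints the global maximum.
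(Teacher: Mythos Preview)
Your proposal is correct and follows essentially the same piecewise optimization argument as the paper: maximize $k/n$ on the saturated piece (attained at the boundary $k=nC-\delta_1\sqrt{nV}$), find the interior stationary point of the concave quadratic on the transition piece, and compare to obtain the threshold $n=9\delta_1^{2}V/C^{2}$. The only addition in the paper's version is a brief justification of the constants $\delta_0=0.5$ and $\delta_1=1.545$ as the minimizers of the integrated absolute error $\int_{-\infty}^{\infty}\lvert \hat P_c - P_c\rvert\,d\chi$, which you (reasonably) take as given from the statement.
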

\begin{proof}
See Appendix \ref{app_prop:lin} .
\end{proof}
The result in Proposition \ref{prop:lin}, and in particular \eqref{eq:finaloptsollin}, provides the optimal trade-off between data size and channel's blocklength.
Note, that since the data size is integer, the optimal solution given in \eqref{eq:finaloptsollin} should be rounded to the nearest integer. Since we are interested in an approximation of the throughput and not its exact calculation, 
the effect of the selected rounding function (i.e., round, ceiling or floor) is negligible. The approximation of the throughput is then obtained by substituting the rounded value of \eqref{eq:finaloptsollin} in \eqref{eq:linappopt}.

Next, we propose a quadratic approximation of $P_c(k,n)$, that, in general, gives tighter results compared to the linear approximation.\bigskip

\subsubsection{Quadratic}
The proposed approximation is quadratic in a defined region of $\chi$ and linear in the rest of the region. Let, the quadratic approximation of the probability of successful transmission be denoted by ${\tilde{P}_c}(k,n)$, and the resulting approximations of the throughput and of the   data packet size be denoted by  ${\tilde u}(k,n)$ and ${\tilde k}$, respectively.

\begin{proposition}\label{prop:quad}
The proposed quadratic approximation is given by
\begin{equation}
  {\tilde{P}_c}(k,n) =
  \begin{cases}
    1 & \text{if $\chi \geq \theta_1$}, \\
    \theta_2\chi(2\theta_1-\chi)+\theta_0& \text{if $0 \leq \chi < \theta_1$}, \\
    \theta_2\chi(2\theta_1+\chi)+\theta_0 & \text{if $-\theta_1 < \chi <0$}, \\
    0 & \text{if $\chi \leq -\theta_1$},
  \end{cases}\label{eq:sec_ord_app}
\end{equation}
where $\theta_0=0.5$, $\theta_1=2.35$, $\theta_2=0.5/{\theta_1}^2$ and $\chi$ is given by \eqref{eq:valueofx}.

The optimal size of the data packet is given by
\begin{equation}
\tilde{k}^{*}=
  \begin{cases}
    \dfrac{2}{3}\left(nC-\theta_1\sqrt{nV}\right)+\theta_3
     & \text{if $n\geq \dfrac{\theta_1^2V}{4C^2}$} \vspace{0.2cm}, \\ 
    \dfrac{1}{3}\left(nC-\theta_1\sqrt{nV}\right) & \text{if $0< n<\dfrac{\theta_1^2V}{4C^2}$}, \label{eq:finaloptsolquad}
  \end{cases} 
 \end{equation}
 where
 \begin{equation}
\theta_3=\dfrac{\sqrt{n}}{3}\left(nC^2-7\theta_1^2V-2\theta_1C\sqrt{nV} \right)^{\frac{1}{2}},
\end{equation}
and the optimal value of the throughput is obtained by
\bea
{\tilde u}^{*}(k,n)=\dfrac{\tilde{k}^{*}}{n}{\tilde{P}_c}(\tilde{k}^{*},n).
\eea
\end{proposition}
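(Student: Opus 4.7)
The plan is to follow the structure of the proof of Proposition~\ref{prop:lin}: first verify that the piecewise approximation $\tilde{P}_c$ has the desired continuity and monotonicity properties, then substitute into $\tilde{u}(k,n)=(k/n)\tilde{P}_c(k,n)$ and optimize over $k$ by case analysis on the region of $\chi=(nC-k)/\sqrt{nV}$ in which the maximizer lies.

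The first step is to check that the choices $\theta_0=0.5$, $\theta_1=2.35$, $\theta_2=0.5/\theta_1^2$ produce a continuously differentiable, monotone, $[0,1]$-valued approximation that agrees with a cdf at all the natural reference points. At $\chi=0$ the middle pieces both give $\tilde{P}_c=\theta_0=0.5$; at $\chi=\theta_1$ the right middle piece equals $\theta_2\theta_1(2\theta_1-\theta_1)+\theta_0=1$ with derivative zero, so it joins the constant-$1$ piece in $C^1$ fashion, and symmetrically at $\chi=-\theta_1$ it joins the constant-$0$ piece. The value $\theta_1=2.35$ is an empirical choice (tuning a single free parameter to minimize fit error against the Gaussian cdf), so this step is a verification rather than a derivation.

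The heart of the proof is to locate the maximizer $\tilde{k}^{*}$ of $\tilde{u}(k,n)=(k/n)\tilde{P}_c(k,n)$. Since each middle piece of $\tilde{P}_c$ is quadratic in $\chi$ and $\chi$ is linear in $k$, the restriction of $\tilde{u}$ to each middle region is a cubic in $k$, and the first-order condition $d\tilde{u}/dk=0$ is a quadratic in $\chi$ amenable to the quadratic formula. The case split in the proposition is driven by evaluating $d\tilde{u}/dk$ at the internal boundary $\chi=0$ (i.e.\ $k=nC$); a direct computation shows that this derivative is non-positive if and only if $n\geq \theta_1^2 V/(4C^2)$. Hence for such blocklengths the unique maximizer lies in the region $0\leq \chi<\theta_1$, and solving the corresponding quadratic and selecting the unique root that falls inside $[0,\theta_1]$ yields $\tilde{k}^{*}=\tfrac{2}{3}(nC-\theta_1\sqrt{nV})+\theta_{3}$. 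For $0<n<\theta_1^2V/(4C^2)$ the maximizer instead lies in $-\theta_1<\chi<0$; there the discriminant of the analogous quadratic simplifies to a perfect square (after the parameter substitution it takes the form $(\theta_1+\sqrt{n/V}\,C)^2$), which collapses the quadratic formula to the closed expression $\tilde{k}^{*}=\tfrac{1}{3}(nC-\theta_1\sqrt{nV})$. Uniqueness of the maximizer at every stage follows from the fact that $\tilde{u}$ inherits log-concavity from Theorem~\ref{theo:concavity} since $\tilde{P}_c$ is itself a log-concave piecewise cdf, so no spurious critical points survive and the selected root is globally optimal.

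The main obstacle is the case analysis rather than any individual calculation: the boundary-derivative test at $\chi=0$ must be carried out carefully to pin down the threshold $n=\theta_1^2 V/(4C^2)$, and in each region the correct root of the quadratic formula must be identified using the constraint that $\chi^{*}$ lies in the interior of that region. Substituting the resulting $\tilde{k}^{*}$ back into $\tilde{u}(k,n)=(\tilde{k}^{*}/n)\tilde{P}_c(\tilde{k}^{*},n)$ then yields the claimed closed-form expression for the optimal throughput.
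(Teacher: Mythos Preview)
Your approach matches the paper's: fix $\theta_0,\theta_1,\theta_2$ via symmetry, $C^1$-matching at $\chi=\pm\theta_1$, and integrated-absolute-error minimization, then optimize $(k/n)\tilde P_c(k,n)$ piecewise over the two middle regions---the paper is in fact terser than your proposal, simply invoking the linear-case methodology of Appendix~\ref{app_prop:lin} without spelling out the derivative test at $\chi=0$ or the perfect-square discriminant in the $\chi<0$ branch that you identify. One small caveat: your uniqueness argument appeals to log-concavity of $\tilde P_c$, which is true on $\{\tilde P_c>0\}$ but does not follow from Theorem~\ref{theo:concavity} (that theorem treats the genuine Gaussian cdf), so it needs its own short verification rather than being ``inherited.''
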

\begin{proof}
See Appendix \ref{app_prop:quad} .
\end{proof}

The optimal trade-off between the data packet size and the channel's blocklength, $n$, as well as the comparison with the provided approximations $\hat{k}^{*}$ and $\tilde{k}^{*}$, are depicted in Fig.~\ref{fig:compklq}. While both approximations perform  well, the optimal data packet size emerged from the quadratic approximation, $\tilde{k}^{*}$, is almost identical to $k^*$. The optimal throughput and  the throughput approximations are illustrated in Fig.~\ref{fig:boundex}. Again, the solution emerged from the quadratic approximation approaches very well the numerical evaluation of the optimal throughput.

\begin{remark}
The results of this section can be employed in order to identify the optimal throughput of various schemes, such as, cognitive communication schemes. For example, assume that data of length $k_A$ arrive at the terminal of the primary user (e.g. Terminal A) with rate $\lambda_A$.  By \eqref{eq:throughputeq0} we can determine the minimum $\omega_A$, denoted by $\omega_A^*$, such that \eqref{eq:throughputeq0} holds (assuming it exists). Then, $\omega_B^*=1-\omega_A^*$, while the throughput of the secondary user (e.g. Terminal B) is $\omega_B\frac{k_B}{n}P_c(k_B,n)$. Thus, optimizing over $k_B$, either numerically or via the approximations, we can identify the optimal data length $k_B$ and the optimal throughput both for the secondary user and the overall network.
\end{remark}

\section{Cooperation in the finite blocklength regime}\label{sec:coop}

In this section, we examine a packet-based network cooperation scenario with bursty arrivals at the source terminals. In particular, we consider a  MARC scheduling and evaluate the performance of a cognitive cooperative protocol in the finite blocklength regime. 

The MARC configuration is consisted of two source terminals, $A$ and $B$, a common cognitive relay and a destination, as depicted in Fig. \ref{fig:coope}.  The data packets arrive at the source terminals, $A$ and $B$, according to independent and stationary Bernoulli processes with 
probabilities, $p_A$ and $p_B$, and expected values, $\lambda_A=p_A$ and $\lambda_B=p_B$, respectively. Each of the source terminals has an infinite size buffer memory, denoted by $Q_i, \ i\in\{A,B\}$, respectively, that stores the incoming data packets. The relay  is equipped with two relaying queues, denoted by $Q_{AR}$ and $Q_{BR}$, in which they store the data packets received from the respective source terminals. Let $P_{c,SD}(k,n)$, $P_{c,SR}(k,n)$ and $P_{c,RD}(k,n)$
denote the probability of a successful transmission from
any source terminal $i\in\{A,B\}$ to the destination, from any source terminal $i\in\{A,B\}$ to the relay, and from the relay to the destination, respectively. The probabilities of erroneous transmissions are then defined by $P_{e,SD}(k,n)=1-P_{c,SD}(k,n)$ and $P_{e,SR}(k,n)=1-P_{c,SR}(k,n)$. Moreover, let $u^{CC}(k,n)$ and $u^{CC,*}(k,n)$ denote the throughput and the maximum throughput of the system, respectively, and $X^{CC}(k,n)$ denote the code rate of the overall scheme.
\begin{figure}
\center
  \includegraphics[width=0.7\linewidth]{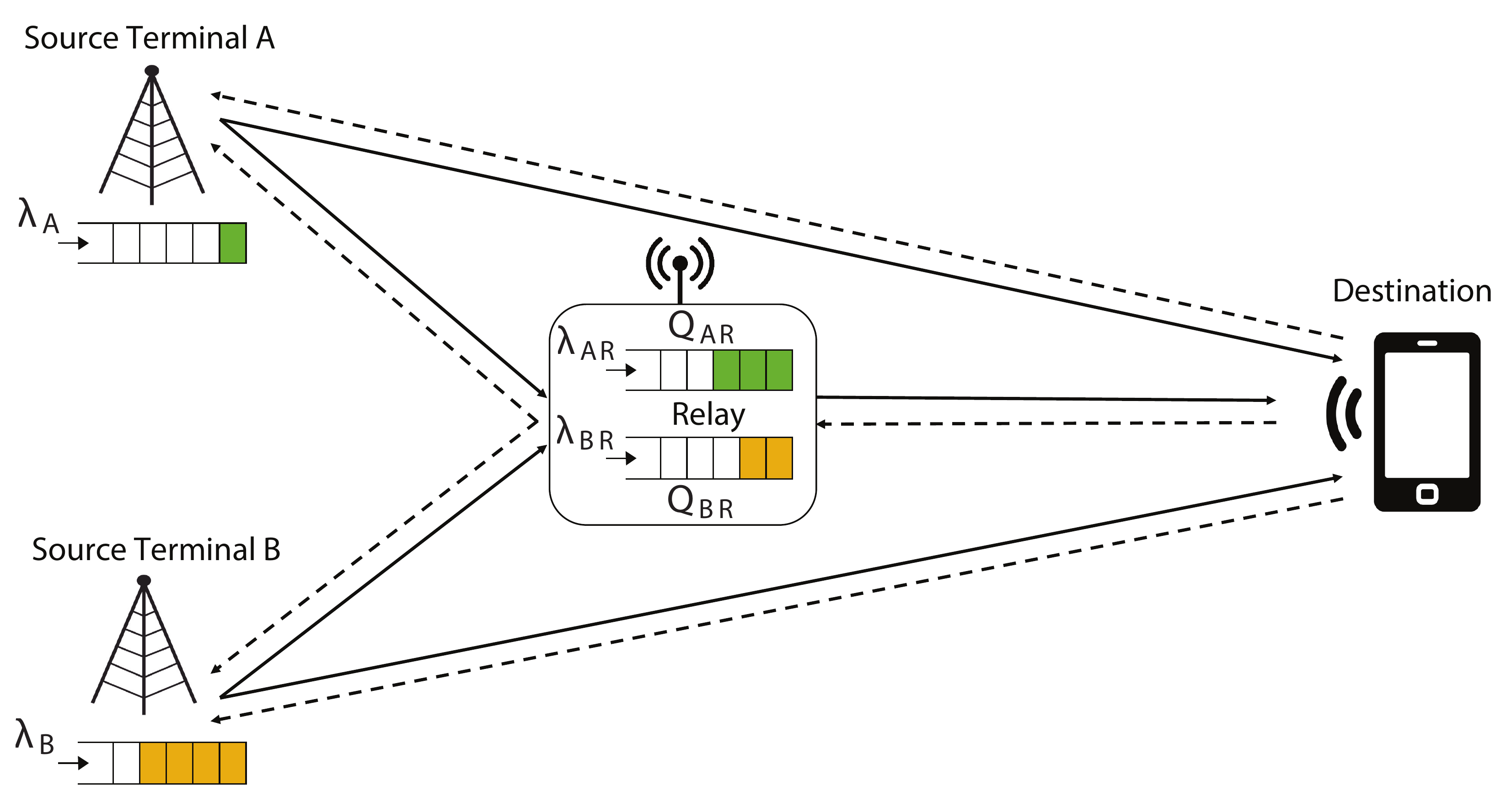}
  \vspace{0.4cm}
  \caption{Model of a MARC-TDMA network. The solid arrows represent the AWGN channels and the dashed arrows the ACK/NACK feedback.}
  \label{fig:coope}
\end{figure}


There is an extensive literature regarding  multiple access protocols in the presence of a cooperating relays \cite{simeone2007,sadek2007,krikidis2009a}. In this work we employ the  cognitive cooperation protocol, defined below.

\begin{definition}\label{def:ccp} 
The Cognitive Cooperation (CC) protocol performs as follows:
\begin{itemize}
\item[i)] Source terminal $i\in\{A,B\}$ encodes the data packet of length $k$ into a codeword of length $n$, and access the channel via a randomized TDMA scheduling with probability $\omega_i, \ i\in\{A,B\}$, and $\omega_A+\omega_B=1$. 
\item[ii)] The codeword is transmitted both to the destination and the relay node. The transmission process is supported by an ACK/NACK mechanism that informs the source terminal and the relay about the transmission status (successful or erroneous).
\item[iii)] If data are not successfully received by either the destination or the relay, the data packet remains in the queue of the source terminal.
\item[iv)] If data are successfully received by the destination, the source terminal removes the data packet from its queue, and the relay  ignores the received packet.
\item[v)] If data are not successfully received by the destination but are successfully received by the relay, the source terminal discards the data packet from its queue, and the relay adds that packet to the respective queue ($Q_{AR}$ or $Q_{BR}$).
\item[vi)] When the source terminal $i\in\{A,B\}$ gains access to the channel but it has no data packets in its queue (queue is idle), the relay encodes a data packet from the respective queue $Q_{iR}, i\in\{A,B\}$, into a codeword of length $n$, and transmits it to the destination.
\end{itemize}
\end{definition}
This protocol, though not ideal in terms of performance, is attractive due to its elegance and simple structure, that allows the interpretation of the results in the context of finite blocklength codes. Nevertheless, the discussed methodology can be applied to more complex protocols.

\begin{figure*}%
\hspace{-0.8cm}
\begin{subfigure}{1\columnwidth}
\center
\centering
\includegraphics[width=0.8\columnwidth]{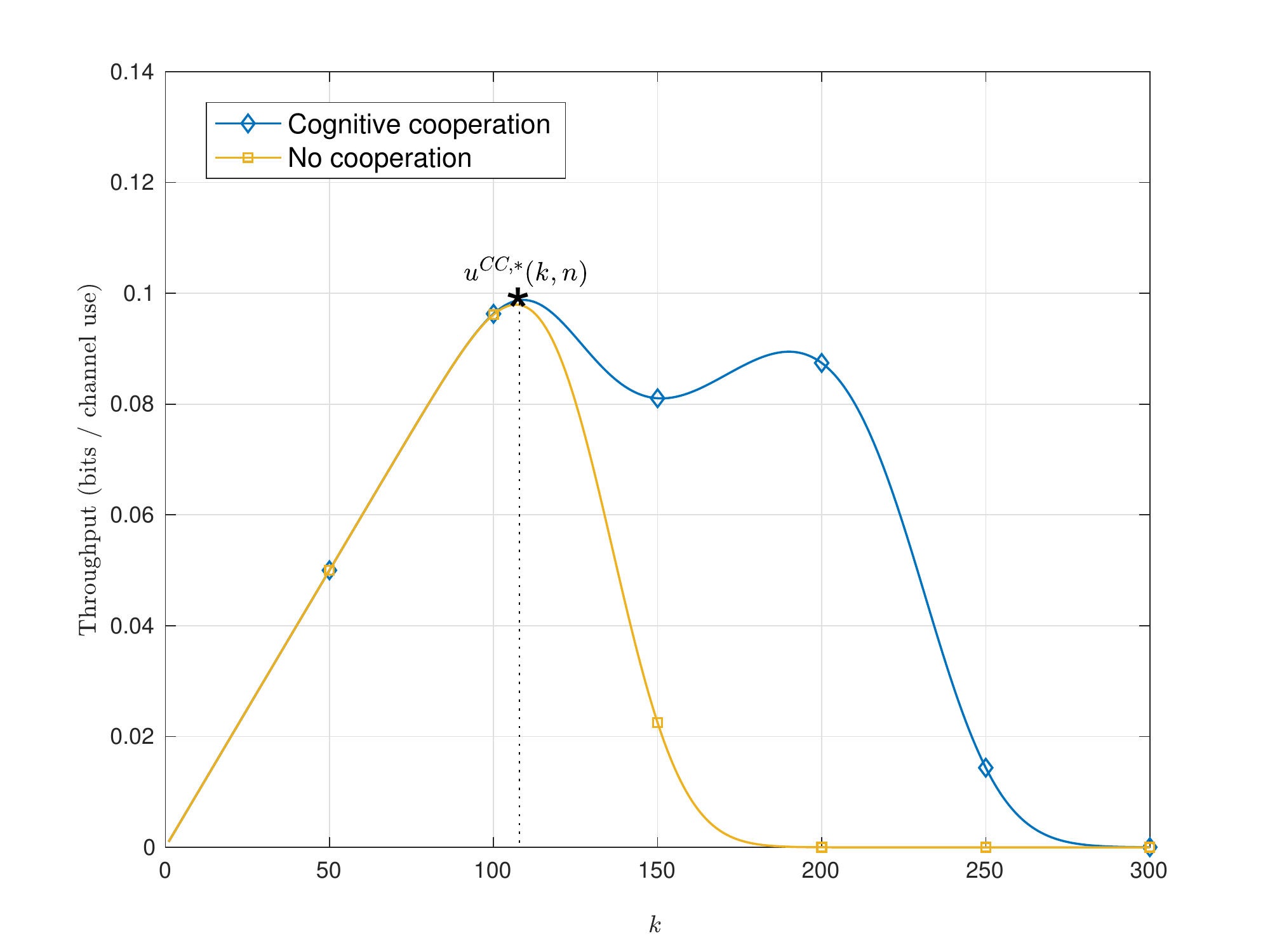}%
\vspace{-0.2cm}
\caption{}%
\label{subfiga}%
\end{subfigure}\hfill%
\begin{subfigure}{1\columnwidth}
\center
\hspace{-0.8cm}
\includegraphics[width=0.8\columnwidth]{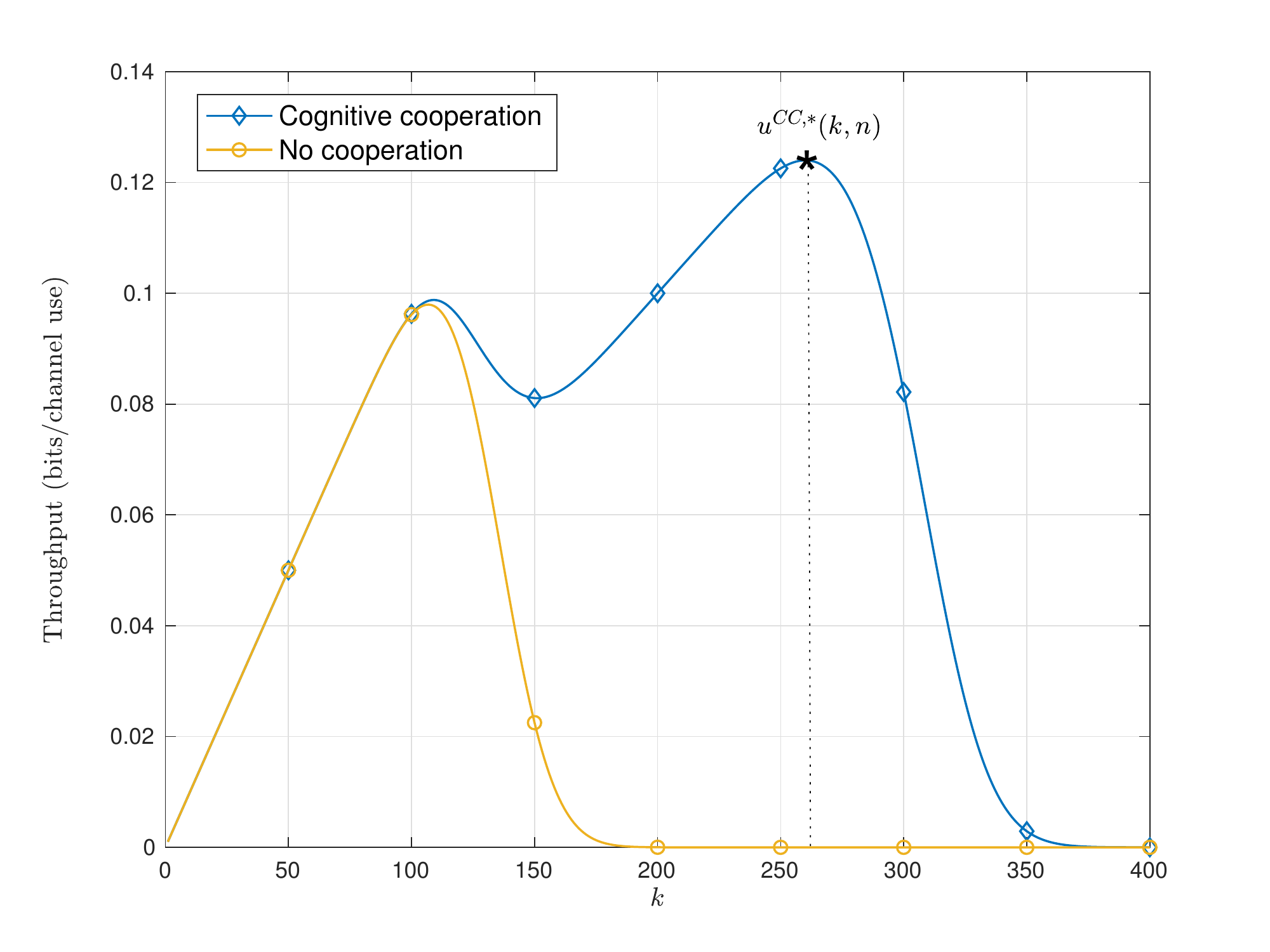}%
\vspace{-0.4cm}
\caption{}%
\label{subfigb}%
\end{subfigure}\hfill%
\caption{Throughput for the non cooperative scheme and the cognitive cooperation scheme,  for fixed blocklength $n=1000$. The channels from the source to the destination, from the source to the relay, and from the relay to the destination, are AWGN with SNR for (a) 0.2, 0.35 and 1, respectively, and for (b) 0.2, 0.5 and 1, respectively.}
\label{figabc}
\end{figure*}

\begin{theorem}\label{theo:coopmarch}
For a fixed blocklength $n$:
\begin{itemize} 
\addtolength{\itemindent}{-0.25cm}
\item[i)] the  stability condition of the CC protocol is given by
\bea
 \frac{\kappa}{n}\left({\lambda_A}+ {\lambda_B}\right)<u^{CC}(k,n), \label{eq_ccsource10}
\eea
\hspace{-0.4cm} where 
\beae
u^{CC}(k,n)&\tri&\frac{k}{n}\dfrac{\left[P_{c,SD}(k,n)+P_{e,SD}(k,n)P_{c,SR}(k,n)\right]P_{c,RD}(k,n) }{\left[P_{c,RD}(k,n)+P_{e,SD}(k,n)P_{c,SR}(k,n)\right]}.\nonumber \\  \label{eq_ccsource10new1}
\eeae
\item[ii)] the code rate of the overall scheme, $X^{CC}(k,n)\tri\frac{k}{n}({\lambda_A}+ {\lambda_B})$, is bounded above by the
 
\hspace{-0.4cm} maximum throughput $u^{CC,*}(k,n)$, that is
\beae
 X^{CC}(k,n)< \max_{k}u^{CC}(k,n) \tri u^{CC,*}(k,n). \label{eq_ccsource10new2a}
\eeae
\end{itemize}
\end{theorem}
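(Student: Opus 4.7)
The plan is to model the system as four coupled discrete-time Bernoulli queues --- the source queues $Q_A,Q_B$ and the relay queues $Q_{AR},Q_{BR}$ --- and to derive the joint stability condition by applying the standard discrete-time Loynes criterion at each queue. Part (i) will emerge from the binding (tightest) per-terminal inequality, and part (ii) will follow immediately by maximizing over $k$.

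First I would dispose of the source queues in isolation. At $Q_A$, external arrivals are Bernoulli($\lambda_A$), while a packet departs whenever $A$ is scheduled (probability $\omega_A$) and the codeword is absorbed either at the destination directly (probability $P_{c,SD}(k,n)$) or at the relay after a failed S-D link (probability $P_{e,SD}(k,n)P_{c,SR}(k,n)$). This gives maximum service rate $\mu_A=\omega_A\left[P_{c,SD}(k,n)+P_{e,SD}(k,n)P_{c,SR}(k,n)\right]$ and, once $Q_A$ is declared stable by Loynes, stationary utilization $\rho_A=\lambda_A/\mu_A$; the analogous statement holds for $Q_B$.

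Next I would turn to the relay queue $Q_{AR}$. Conservation of flow at $Q_A$ identifies the arrival rate at $Q_{AR}$ with the throughput of $A$ routed through the relay, namely $\lambda_{AR}=\lambda_A\cdot\frac{P_{e,SD}(k,n)P_{c,SR}(k,n)}{P_{c,SD}(k,n)+P_{e,SD}(k,n)P_{c,SR}(k,n)}$. By the CC protocol, the relay serves $Q_{AR}$ precisely when $A$ is scheduled, $Q_A$ is idle, and the R-D link succeeds, yielding service rate $\mu_{AR}=\omega_A(1-\rho_A)P_{c,RD}(k,n)$. Imposing $\lambda_{AR}<\mu_{AR}$ and simplifying produces
\[
\lambda_A<\omega_A\cdot\frac{P_{c,RD}(k,n)\left[P_{c,SD}(k,n)+P_{e,SD}(k,n)P_{c,SR}(k,n)\right]}{P_{c,RD}(k,n)+P_{e,SD}(k,n)P_{c,SR}(k,n)},
\]
which is strictly tighter than $\lambda_A<\mu_A$ and is therefore the binding constraint on terminal $A$; a symmetric bound with $\omega_B$ holds for $\lambda_B$. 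Summing the two inequalities, invoking $\omega_A+\omega_B=1$, and multiplying through by $k/n$ gives exactly the stability condition \eqref{eq_ccsource10}. Part (ii) is then immediate: any code rate compatible with stability must satisfy $X^{CC}(k,n)<u^{CC}(k,n)$ for every admissible $k$, and in particular for the maximizer $k^{\ast}$, yielding $X^{CC}(k,n)<\max_k u^{CC}(k,n)\tri u^{CC,\ast}(k,n)$.

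The main obstacle I anticipate is rigorously justifying the two ingredients of the relay-queue step: the identification of the $Q_A$ idle probability with $1-\rho_A$, and the factorization of ``$A$ scheduled'' against ``$Q_A$ idle''. The first is a standard steady-state identity for a stable Bernoulli/Bernoulli queue that falls out of flow conservation once Loynes' theorem supplies ergodicity. The second rests on the fact that the randomized TDMA is independent of queue state, so a scheduling slot does not bias the instantaneous occupancy of $Q_A$; together with the discrete-time analogue of PASTA this legitimates the product form $\omega_A(1-\rho_A)$ for $\mu_{AR}$. A minor additional point is that $Q_{AR}$ and $Q_{BR}$ are decoupled in the scheduling dimension, since the relay forwards $A$-packets only in $A$-slots and $B$-packets only in $B$-slots, so the two relay-queue stability conditions can be treated independently before being combined with those of the source queues.
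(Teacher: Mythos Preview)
Your proposal is correct and follows essentially the same route as the paper: derive the source-queue service rate $\mu_i=\omega_i[P_{c,SD}+P_{e,SD}P_{c,SR}]$, use the resulting idle probability $\pi_{i,0}=1-\lambda_i/\mu_i$ to compute the relay-queue arrival and service rates, impose $\lambda_{iR}<\mu_{iR}$, observe that the relay constraint is tighter than the source constraint (the paper phrases this as $\Lambda_R^{CC}\subseteq\Lambda_S^{CC}$), sum over $i$ using $\omega_A+\omega_B=1$, and scale by $k/n$. The only cosmetic difference is that you obtain $\lambda_{AR}$ via a flow-conservation argument (fraction of $\lambda_A$ routed through the relay), whereas the paper writes it as the per-slot event probability $\omega_i(1-\pi_{i,0})P_{e,SD}P_{c,SR}$; these are algebraically identical once $1-\pi_{i,0}=\lambda_i/\mu_i$ is substituted, and your extra care about PASTA and the independence of the randomized switch from queue state is more explicit than the paper, which simply cites a queueing reference.
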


\begin{proof}
See Appendix \ref{app_theo:coopmarch}.
\end{proof}

Although the optimization problem defined in \eqref{eq_ccsource10new2a} is not necessarily concave, it can be evaluated via exhaustive search. This does not introduce any additional computational complexity, due to the integer nature of the optimization problem.

The Non Cooperative (NC) protocol (absence of the relay terminal), is a special case of the cooperative protocol, with $P_{c,SR}(k,n)=0$. The overall stability condition is identical to the overall stability condition of source terminals, given
by \eqref{eq_ccsource3}, where $P_{c,SR}(k,n)=0$, which yields the following corollary.

\begin{corollary} 
For a fixed blocklength $n$, the overall stability condition of the non cooperative system  is given by
\beae
  ({\lambda_A}+ {\lambda_B})\frac{k}{n}&<\max_{k}&u^{NC}(k,n)
\tri u^{NC,*}(k,n), \label{eq_ccsource101}
\eeae
where $u^{NC}(k,n)\tri\frac{k}{n}P_{c,SD}(k,n)$.
\end{corollary}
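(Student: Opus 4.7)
The plan is to treat the non-cooperative system as a degenerate instance of the CC protocol in which the source-to-relay link never carries information, i.e.\ $P_{c,SR}(k,n)=0$. Operationally this means that every packet that is not decoded by the destination is also \emph{not} decoded by the relay, so the relay queues $Q_{AR}$ and $Q_{BR}$ remain empty at all times, the relay never transmits, and steps (v)--(vi) of Definition~\ref{def:ccp} become vacuous. What remains is precisely the TDMA transmission of Fig.~\ref{fig:model1}, with success probability $P_{c,SD}(k,n)$ on the source-to-destination link, which is exactly the NC model.

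With this identification in place, I would simply specialize the throughput expression \eqref{eq_ccsource10new1} by substituting $P_{c,SR}(k,n)=0$:
\begin{equation*}
u^{CC}(k,n)\Big|_{P_{c,SR}=0}=\frac{k}{n}\,\dfrac{\bigl[P_{c,SD}(k,n)+P_{e,SD}(k,n)\cdot 0\bigr]P_{c,RD}(k,n)}{P_{c,RD}(k,n)+P_{e,SD}(k,n)\cdot 0}=\frac{k}{n}P_{c,SD}(k,n),
\end{equation*}
which coincides with $u^{NC}(k,n)$ by definition. Note that $P_{c,RD}(k,n)>0$ for any meaningful relay-destination channel, so the denominator of \eqref{eq_ccsource10new1} does not vanish and the limit is well defined. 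Plugging this simplification into the stability condition \eqref{eq_ccsource10} yields $\tfrac{k}{n}(\lambda_A+\lambda_B)<u^{NC}(k,n)$, and maximizing the right-hand side over the design parameter $k$ gives the tightest achievable bound $u^{NC,*}(k,n)=\max_k u^{NC}(k,n)$, which is exactly the claim.

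The only mildly delicate point, and the one I would pay the most attention to, is verifying that the internal balance equations used in the proof of Theorem~\ref{theo:coopmarch} (presumably a Loynes/queue-stability argument involving the relay queues $Q_{AR},Q_{BR}$) degenerate cleanly when $P_{c,SR}=0$: the arrival rate into each relay queue is proportional to $P_{e,SD}(k,n)P_{c,SR}(k,n)$ and hence vanishes, so the relay queues are trivially stable and drop out of the joint stability condition, leaving only the source-queue conditions reducing to $\lambda_i<\omega_i P_{c,SD}(k,n)$. Summing over $i\in\{A,B\}$ (since $\omega_A+\omega_B=1$) reproduces the scalar condition above. Beyond this bookkeeping, no further work is required; the corollary is truly a direct specialization of Theorem~\ref{theo:coopmarch}.
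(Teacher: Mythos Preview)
Your proposal is correct and follows essentially the same approach as the paper: both treat the non-cooperative system as the $P_{c,SR}(k,n)=0$ specialization of the CC protocol. The only cosmetic difference is that the paper substitutes $P_{c,SR}=0$ directly into the source-terminal stability region \eqref{eq_ccsource3} (noting that the relay constraint becomes redundant), whereas you substitute into the combined throughput expression \eqref{eq_ccsource10new1} and then observe the $P_{c,RD}$ factors cancel; both routes yield $u^{NC}(k,n)=\tfrac{k}{n}P_{c,SD}(k,n)$ immediately.
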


\begin{remark}
The performance, in terms of maximum throughput, of the CC protocol described in Definition \ref{def:ccp}, is not always better than the performance of the NC protocol. For example assume $P_{c,SD}(k,n)>0, \ P_{c,SR}(k,n)>0$ and $P_{c,RD}(k,n)=0$. Then, from \eqref{eq_ccsource10} and \eqref{eq_ccsource101} we have $u^{CC,*}(k,n)=0$ and $u^{NC,*}(k,n)>0$. However, with some minor modifications of the CC protocol, we can guarantee that its performance is always better or at least equal to the performance of the NC protocol.
\end{remark}

The optimal throughput of the non cooperative scheme, $u^{NC,*}(k,n)$, and the optimal throughput of the cognitive cooperation scheme, $u^{CC,*}(k,n)$, for two different channel triplets, is given in Fig. \ref{figabc}. For the selected $SNR$ triplet that is depicted in Fig. \ref{figabc}(a), the increase of the throughput
due to cooperation is negligible, whereas for the $SNR$ triplet  depicted in Fig. \ref{figabc}(b), cooperation increases throughput approximately by $25\%$. However, taking into consideration the commitment of additional resources (relay, buffer memories and channels), the gain in the performance that cognitive cooperation exhibits over the non cooperative scheme, cannot be characterized as satisfactory. Comments for the insufficient performance of the cognitive cooperation protocol are given in the following remark.

\begin{remark}\label{rem:cc}
The expression of the throughput for the overall system,  $u^{CC}(k,n)$, involves the statistical characteristics (SNR) of all available channels. Optimizing throughput with respect to $k$, results in an optimal data packet size $k^*$, that is employed both from the source terminal and the relay. Thus,  $k^*$ emerges as a compromise between the different statistics of those channels. This is directly reflected on the performance of the overall network, since, different channels pack the same amount of data into the codeword of fixed length $n$.   An obvious solution to this problem is to allow the source terminal and the relay to pack different amount of data into the codeword (e.g. source packs $k_S$ bits into the codeword while the relay packs $k_{R}$ bits into the codeword), however, this is highly impractical, since, it introduces significant amount of complexity to the destination.
\end{remark}

\section{Cognitive cooperation via Batch and forward}\label{sec:albaf}
\begin{figure}
\centering
  \includegraphics[width=0.7\linewidth]{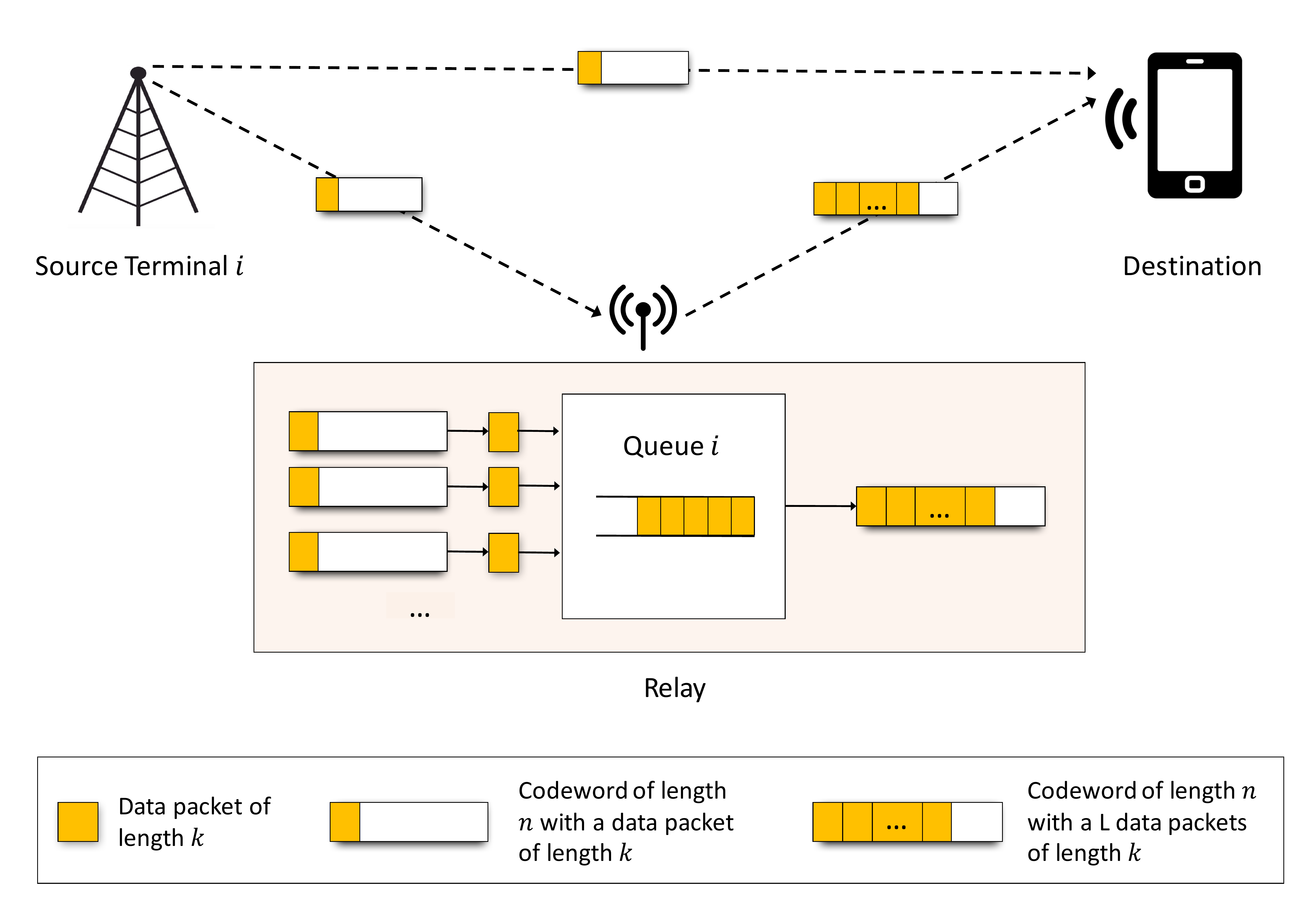}
  \caption{The Batch-And-Forward strategy for a source terminal $i$-relay pair.}
  \label{fig:marcbafck}
\end{figure} 
Motivated by the insufficient performance of the cognitive cooperation protocol, we propose a novel strategy that addresses the concerns encapsulated in Remark \ref{rem:cc}, and boosts the performance of the cognitive cooperation protocols in the finite blocklength regime. We evaluate the performance of the proposed strategy for the particular cognitive cooperation protocol given in Definition \ref{def:ccp}.

The proposed BAF strategy, depicted in Fig. \ref{fig:marcbafck}, keeps the data packet size the same for all individual nodes of the network, however, each node is allowed to batch more than one data packets into the codeword of length $n$. The number of data packets that are batched into the fixed length codeword, is denoted by $L$. Thus, this approach exploits the individual statistical characteristics of the different channels of the network, without imposing additional complexity on the overall scheme.

Next, we embed the BAF strategy at the relay of the cognitive cooperation protocol, and evaluate the performance of the overall network. This is implemented by replacing  step vi) of Definition \ref{def:ccp} with the following step.   
\addtolength{\itemindent}{0.2cm}
\begin{itemize}
\item[vi)] The relay batches $L$ data packets from the queue $Q_{iR}, i\in\{A,B\}$, and encodes them into a codeword of length $n$. When the source terminal $i\in\{A,B\}$ gains access to the channel, and it has no data packets in its queue (queue is idle),  the relay  transmits the codeword consisting of the $L$ data packets to the destination. If there are less than $L$ data packets in the respective queue at the relay, the relay does not transmit any information.
\end{itemize}
All the other procedures of Definition \ref{def:ccp} do not change.

\begin{theorem}\label{theo:baf}
Suppose  that the relay employs the BAF strategy and 
let $u^{BAF}(Lk,n)$ denote the overall throughput of cooperative scheme. Then, for a fixed blocklength $n$: 
\begin{itemize} 
\addtolength{\itemindent}{-0.25cm}
\item[i)]
The stability region of the BAF cooperative scheme satisfies
\begin{equation}
({\lambda_A}+ {\lambda_B})\frac{k}{n}< u^{BAF}(Lk,n), \label{eqthe0}
\end{equation}
where $k=1,2,\ldots$, $L=1,2,\ldots$, and
\begin{align}
u_S^{CC}(k,n)&=\frac{k}{n}\left[P_{c,SD}(k,n)+P_{e,SD}(k,n)P_{c,SR}(k,n)\right],\label{eqthe2}\\
u_R^{BAF}(Lk,n)&=\frac{Lk}{n}\dfrac{\left[P_{c,SD}(k,n)+P_{e,SD}(k,n)P_{c,SR}(k,n)\right]P_{c,RD}(Lk,n) }{\left[P_{c,RD}(Lk,n)+P_{e,SD}(k,n)P_{c,SR}(k,n)\right]},\label{eqthe3}\\
u^{BAF}(Lk,n)&=\min\left\{u_S^{CC}(k,n), u_R^{BAF}(Lk,n) \right\} \label{eqthe1}\\
&=\left\{
                \begin{array}{ll}
                  u_S^{CC}(k,n) \ \ \ \  \mbox{if} \ \ \ L\geq 1+\dfrac{P_{e,SD}(k,n)P_{c,SR}(k,n)}{P_{c,RD}(Lk,n)}, \\
                  u_R^{BAF}(Lk,n) \  \mbox{if} \ \ \ L< 1+\dfrac{P_{e,SD}(k,n)P_{c,SR}(k,n)}{P_{c,RD}(Lk,n)}.
                \end{array}
              \right. 
\end{align}
\item[ii)]  The code rate of the overall scheme, $X^{BAF}(k,n)$, is \addtolength{\itemindent}{-0.5cm} bounded above by the maximum throughput $u^{BAF,*}(k,n)$, that is
\beae
 X^{BAF}(k,n)<  u^{BAF,*}(k,n), \label{eq_ccsource10new2}
\eeae
where
\begin{equation}
u^{BAF,*}(Lk,n)=\max_{L, k}u^{BAF}(Lk,n).
\end{equation}
\end{itemize} 
\end{theorem}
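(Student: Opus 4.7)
The plan is to extend the proof of Theorem~\ref{theo:coopmarch} by replacing the single-packet service at the relay with a bulk-service mechanism that delivers $L$ packets per successful transmission. Since only step (vi) of Definition~\ref{def:ccp} is modified, the source queues $Q_A$ and $Q_B$ experience exactly the same dynamics as in the CC protocol: a head-of-line packet leaves $Q_i$ iff the assigned TDMA slot occurs and either the direct link or the source-to-relay link succeeds. The per-slot service probability at $Q_i$ is therefore $\omega_i[P_{c,SD}(k,n)+P_{e,SD}(k,n)P_{c,SR}(k,n)]$, so Loynes' criterion for a $\mathrm{Geo}/\mathrm{Geo}/1$ queue gives stability iff $\lambda_i<\omega_i[P_{c,SD}(k,n)+P_{e,SD}(k,n)P_{c,SR}(k,n)]$. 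Summing over $i\in\{A,B\}$ with $\omega_A+\omega_B=1$ and multiplying by $k/n$ reproduces the source-side bound $(\lambda_A+\lambda_B)\tfrac{k}{n}<u_S^{CC}(k,n)$ of \eqref{eqthe2}.

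Next I would turn to the relay queues $Q_{iR}$, which is where the BAF modification actually bites. Conditional on the source queues being stable, the stationary idle probability of $Q_i$ is $1-\rho_i$ with $\rho_i=\lambda_i/\{\omega_i[P_{c,SD}(k,n)+P_{e,SD}(k,n)P_{c,SR}(k,n)]\}$. Packets enter $Q_{iR}$ exactly when source $i$'s slot is used, the direct link fails, and the source-to-relay link succeeds, yielding an arrival rate $\omega_i\rho_i P_{e,SD}(k,n)P_{c,SR}(k,n)$ that, after substituting $\rho_i$, equals $\lambda_i P_{e,SD}(k,n)P_{c,SR}(k,n)/[P_{c,SD}(k,n)+P_{e,SD}(k,n)P_{c,SR}(k,n)]$. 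On the service side, when source $i$ is idle and at least $L$ packets are queued, the relay transmits an aggregated codeword of $Lk$ information bits in $n$ channel uses and delivers the batch with success probability $P_{c,RD}(Lk,n)$. Invoking a bulk-service stability criterion for this server-gated queue and simplifying leads to the relay-side bound $(\lambda_A+\lambda_B)\tfrac{k}{n}<u_R^{BAF}(Lk,n)$ of \eqref{eqthe3}.

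For part~(i), overall stability of the coupled queueing network requires both bounds simultaneously, and taking the minimum yields \eqref{eqthe1}. The case split on $L$ then follows by equating $u_S^{CC}(k,n)$ with $u_R^{BAF}(Lk,n)$: after cancelling the common factor $P_{c,SD}(k,n)+P_{e,SD}(k,n)P_{c,SR}(k,n)$, the crossover condition simplifies algebraically to $L=1+P_{e,SD}(k,n)P_{c,SR}(k,n)/P_{c,RD}(Lk,n)$, so the source constraint binds above this threshold and the relay constraint below it. Part~(ii) is then immediate: stability requires $X^{BAF}(k,n)<u^{BAF}(Lk,n)$ to hold for some feasible pair $(k,L)$, and maximizing the right-hand side over $k$ and $L$ gives $u^{BAF,*}(Lk,n)=\max_{k,L}u^{BAF}(Lk,n)$, which establishes \eqref{eq_ccsource10new2}.

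The technically delicate step will be the bulk-service analysis of $Q_{iR}$ in paragraph two. The relay queue is coupled to the source queue (it is served only when the source is idle), and the relay's $L$-threshold gating rule introduces a genuine batch-service aspect absent from the CC proof. Establishing a sharp sufficient-and-necessary stability condition will likely require a dominant-system construction in the spirit of Rao--Ephremides to decouple the two queues, combined with a batch-service stability result that explicitly accounts for the ``transmit only if at least $L$ packets are queued'' rule; these two ingredients, more than any closed-form algebra, are where the bulk of the technical work will sit.
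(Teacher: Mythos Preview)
Your plan is correct and matches the paper's proof almost step for step: reuse the CC source-queue analysis unchanged, compute the relay arrival rate exactly as in \eqref{eq_ccsource5}, replace the relay service by a batch-of-$L$ departure with per-slot success probability $\omega_i\pi_{i,0}P_{c,RD}(Lk,n)$, and then take the more restrictive of the source and relay constraints to get~\eqref{eqthe1}. The case split and part~(ii) are handled exactly as you describe.

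One remark on the ``delicate step'' you flag. You anticipate needing a Rao--Ephremides dominant-system construction to decouple $Q_i$ from $Q_{iR}$, but the paper does not do this, and for a structural reason: under the CC/BAF protocol the source queue $Q_i$ evolves \emph{independently} of the relay queue (a packet leaves $Q_i$ whenever the destination or the relay decodes, regardless of the state of $Q_{iR}$). The coupling is strictly one-directional, so once the source is stable you may legitimately plug its stationary idle probability $\pi_{i,0}=1-\lambda_i/\mu_i$ into the relay analysis without any dominance argument. The paper then simply models $Q_{iR}$ as a $\mathrm{Geo}/\mathrm{Geo}^{L}/1$ queue and quotes the bulk-service stability condition from \cite{alfa}; the gating rule (transmit only if at least $L$ packets are present) is precisely the fixed-batch server that model covers. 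So the technical work you anticipate largely evaporates, and your proposal is, if anything, more cautious than the paper's own argument.
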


\begin{proof} 
  See Appendix \ref{app_theo:baf}.
\end{proof}

\begin{figure*}%
\centering
\begin{subfigure}{1\columnwidth}
\center
\hspace{-0.5cm}
\includegraphics[width=0.8\columnwidth]{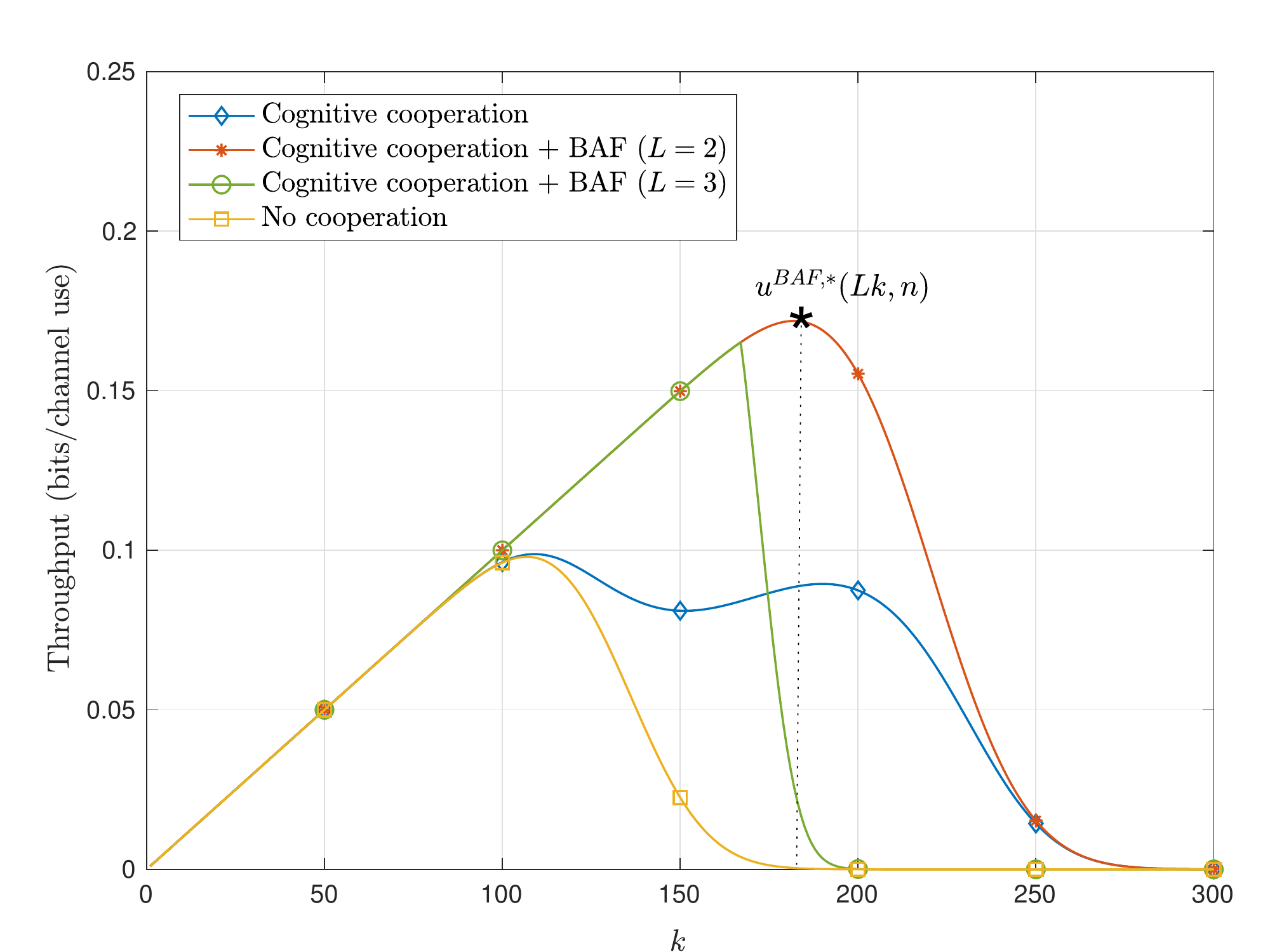}%
\vspace{-0.2cm}
\caption{}%
\label{subfiga1}%
\end{subfigure}\hfill%
\begin{subfigure}{1\columnwidth}\centering
\center
\includegraphics[width=0.8\columnwidth]{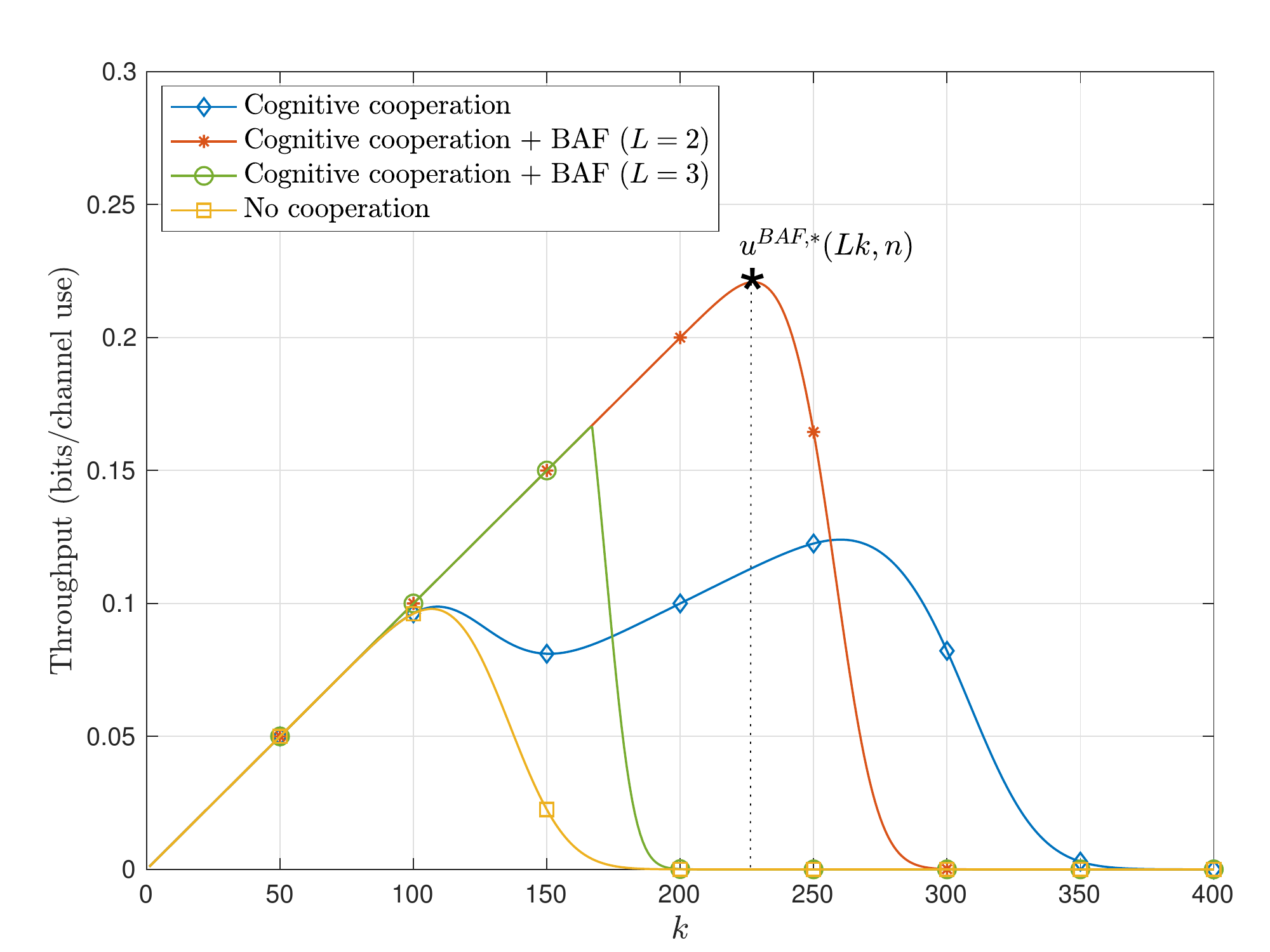}%
\vspace{-0.2cm}
\caption{}%
\label{subfigb1}%
\end{subfigure}\hfill%
\vspace{-0.2cm}
\caption{Throughput of the cognitive cooperation protocol embedded with the BAF strategy, for fixed blocklength $n=1000$, and comparison with the no cooperation scheme and the cognitive cooperation protocol. The channels from the source to the destination, from the source to the relay, and from the relay to the destination, are AWGN with $SNR$ for (a) 0.2, 0.35 and 1, respectively, and for (b) 0.2, 0.5 and 1, respectively.}
\label{figabc1}
\end{figure*}

The performance of the cognitive cooperation protocol with BAF strategy at the relay, is illustrated in Fig. \ref{figabc1}.
For the selected $SNR$ triplet depicted in Fig. \ref{figabc1}(a),
the optimal data packet size is 182 bits, whereas for the selected $SNR$ triplet depicted in Fig. \ref{figabc1}(b),
the optimal data packet size is 227 bits. As it is depicted, the optimal batching size,  for both cases, is $L=2$. It is also obvious that for both cases, the BAF strategy can significantly enhance the performance of the overall system (approximately by 75\%, in both scenarios), compared to the cognitive cooperation protocol without BAF. For both of the scenarios above, the $SNR$ of the channel between the relay and the destination, is higher than the $SNR$ of the channel between the source terminal and the destination, thus, is beneficiary for the overall performance of the network to apply the BAF strategy at the relay. For other scenarios in which the  $SNR$ of the channel between the source terminal and the destination, is higher than the $SNR$ of the channel between the relay and the destination, it would have been beneficial to have applied the BAF strategy at the source terminals.

Next we provide the overall stability conditions and the optimal throughput, when the BAF strategy is employed at the source terminals instead of the relay.

\begin{corollary}\label{cor:baf}
Suppose  that the BAF strategy is employed by the source terminal and 
let $u^{SBAF}(Lk,n)$ denote the overall throughput of cooperative scheme. Then, for a fixed blocklength $n$: 
\begin{itemize} 
\addtolength{\itemindent}{-0.25cm}
\item[i)]
The stability region of the BAF cooperative scheme satisfies
\begin{equation}
({\lambda_A}+ {\lambda_B})\frac{k}{n}< u^{SBAF}(Lk,n), \label{eqthe0s}
\end{equation}
where $k=1,2,\ldots$, $L=1,2,\ldots$, and
\begin{align}
u_S^{BAF}(Lk,n)&=\frac{Lk}{n}\left[P_{c,SD}(Lk,n)+P_{e,SD}(Lk,n)P_{c,SR}(Lk,n)\right],\label{eqthe2s}\\
u_R^{CC}(k,n)&=\frac{k}{n}\dfrac{\left[P_{c,SD}(Lk,n)+P_{e,SD}(Lk,n)P_{c,SR}(Lk,n)\right]P_{c,RD}(k,n) }{\left[P_{c,RD}(k,n)+P_{e,SD}(Lk,n)P_{c,SR}(Lk,n)\right]},\label{eqthe3s}\\
u^{SBAF}(Lk,n)&=\min\left\{u_S^{BAF}(Lk,n), u_s^{BAF}(k,n) \right\} \label{eqthe1s}\\
&=\left\{
                \begin{array}{ll}
                  u_S^{BAF}(Lk,n) \ \ \ \  \mbox{if} \ \ \ L\geq \dfrac{P_{c,RD}(Lk,n)}{\left[P_{c,RD}(k,n)+P_{e,SD}(Lk,n)P_{c,SR}(Lk,n)\right]}, \\
                  u_R^{CC}(Lk,n) \ \ \ \ \   \mbox{if} \ \  \ L< \dfrac{P_{c,RD}(Lk,n)}{\left[P_{c,RD}(k,n)+P_{e,SD}(Lk,n)P_{c,SR}(Lk,n)\right]}.
                \end{array}
              \right. 
\end{align}
\item[ii)]  The code rate of the overall scheme, $X^{SBAF}(k,n)$, is \addtolength{\itemindent}{-0.5cm} bounded above by the maximum throughput $u^{SBAF,*}(k,n)$, that is $X^{SBAF}(k,n)<  u^{SBAF,*}(k,n)$,
where
\begin{equation}
u^{SBAF,*}(Lk,n)=\max_{L, k}u^{SBAF}(Lk,n).
\end{equation}
\end{itemize} 
\end{corollary}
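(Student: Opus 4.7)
The plan is to mirror the proof of Theorem \ref{theo:baf} under the role reversal that the source now batches $L$ packets per codeword while the relay reverts to single-packet transmissions. I would begin with the queueing dynamics at each source buffer $Q_i$. Because every transmission attempt by source $i$ packs $L$ packets of size $k$ into a length-$n$ codeword, a successful attempt (either directly at the destination or via capture at the relay) removes $L$ packets from $Q_i$; the probability of such success per slot granted to terminal $i$ is $P_{c,SD}(Lk,n)+P_{e,SD}(Lk,n)P_{c,SR}(Lk,n)$. Applying Loynes' theorem to the resulting batch-service queue, with per-slot packet-service rate $\omega_i L[P_{c,SD}(Lk,n)+P_{e,SD}(Lk,n)P_{c,SR}(Lk,n)]$, yields the source-stability condition $\lambda_i<\omega_i L[P_{c,SD}(Lk,n)+P_{e,SD}(Lk,n)P_{c,SR}(Lk,n)]$. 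Summing over $i\in\{A,B\}$, using $\omega_A+\omega_B=1$, and multiplying by $k/n$ produces the source-throughput expression $u_S^{BAF}(Lk,n)$ of \eqref{eqthe2s}.

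Next I would analyze the relay queue $Q_{iR}$. Each failed source transmission that is captured by the relay pushes $L$ packets into $Q_{iR}$, so packets arrive at rate $\omega_i L P_{e,SD}(Lk,n)P_{c,SR}(Lk,n)$ per slot. The relay serves single $k$-bit packets with per-attempt success probability $P_{c,RD}(k,n)$, but only during source-idle slots, whose long-run fraction is obtained (exactly as in the proof of Theorem \ref{theo:coopmarch}) from the source-queue utilization. Equating the arrival and departure rates at $Q_{iR}$, solving for the maximum supportable $\lambda_A+\lambda_B$, and multiplying by $k/n$ gives, after the same algebraic rearrangement as in Theorem \ref{theo:coopmarch}, the end-to-end relay-limited throughput $u_R^{CC}(k,n)$ of \eqref{eqthe3s}. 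The effective throughput is then $\min\{u_S^{BAF}(Lk,n),u_R^{CC}(k,n)\}$, and the piecewise expression in \eqref{eqthe1s} is obtained by rearranging the inequality $u_S^{BAF}(Lk,n)\gtreqless u_R^{CC}(k,n)$ for $L$; the crossover threshold is precisely $L=P_{c,RD}(Lk,n)/[P_{c,RD}(k,n)+P_{e,SD}(Lk,n)P_{c,SR}(Lk,n)]$. Part (ii) follows immediately from part (i) by maximizing the discrete-valued throughput jointly over $(k,L)$.

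The main obstacle, as in the proof of Theorem \ref{theo:baf}, is rigorously handling the coupling between the source and relay queues: the relay's effective service rate depends on the long-run fraction of source-idle slots, so the arrival and departure processes at $Q_{iR}$ are not independent of the source state. I would address this by invoking the same dominant-system argument used in Theorem \ref{theo:baf} (or, equivalently, by studying the joint Markov chain of queue lengths and appealing to a Foster--Lyapunov drift criterion) to decouple the two stability conditions and validate the min-throughput formula. Once this coupling is resolved, the remaining computations are direct translations of those in Theorems \ref{theo:coopmarch} and \ref{theo:baf} under the swap of the batching variable from the relay to the source.
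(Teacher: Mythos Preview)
Your proposal is correct and follows essentially the same route as the paper, which simply states that the proof is analogous to that of Theorem~\ref{theo:baf} with the batching moved from the relay to the source. Your extra remarks about decoupling the interacting queues via a dominant-system or Foster--Lyapunov argument add rigor that the paper leaves implicit (it works directly with the $Geo/Geo/1$ and $Geo/Geo^{L}/1$ stability conditions and the stationary idle probability $\pi_{i,0}$), but the underlying computation is identical.
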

\begin{proof} 
The proof is similar to the proof of Theorem~\ref{theo:baf}, thus is omitted.
\end{proof}

Note that that the optimal performance of the network is then obtained by maximizing the outcomes of Theorem~\ref{theo:baf} and Corollary~\ref{cor:baf}. That is, the optimal throughput of the network, $u^{*}$, is given by
\bea
u^{*}=\max\left\{u^{SBAF,*},u^{BAF,*}\right\}.
\eea

Though the majority of the classical cooperative techniques can be employed for short packet communication, they cannot fully correspond to the special characteristics of short codes, since, they were not particularly designed to perform optimally in the finite blocklength regime. The proposed approach, however, can significantly enhance the performance  of the network, while at the same time it meets the finite blocklength requirements. Moreover, it can reduce the requirements in metadata, a challenging task in the actual implementation of short codes \cite{popo2016}, since, it avoids the unnecessary repetition of metadata (e.g. address of the source terminal and the destination). Perhaps the most attractive feature, however, of the BAF strategy, is that it can be embedded into existing cooperative protocols, without imposing any additional complexity to the system. 

\section{Conclusion}

In this work, we employed tools and results from information theory, stochastic processes and queueing theory, in order to provide a comprehensive framework regarding the analysis of a TDMA network with bursty traffic, in the finite blocklength regime. In particular,  we examined the stability of a TDMA network,  evaluated the optimal throughput for fixed blocklength constraints, and identified the optimal trade-off between data length and latency, both numerically and via the proposed closed form approximations. Moreover, we examined the MARC-TDMA network, evaluated the stability conditions for a particular cognitive cooperation protocol, and proposed the BAF strategy that can enhance the finite blocklength performance of cognitive protocols. The BAF strategy can be easily embedded in existing cooperative techniques without imposing additional complexity. In the current work we did not address issues regarding metadata, such as, impact of metadata on the performance and design of metadata for short codes. This is a challenging task for the performance analysis of finite blocklength analysis, that will be investigated as a part of future work.

\begin{appendix}
\label{appendix}
\subsection{Proof of Theorem~\ref{theo:main1}.}
\label{appendix:theom1}
The stability conditions of the underlying Markov chains at the two terminals depend on the existence, or non-existence, of a stationary distribution, defined by
\begin{equation}
{\pi}_{i,j}=\lim_{m\rar\infty}P(S_m=j), \ j\geq 0, i\in\{A,B\}.
\end{equation}
The characterization of the stationary distribution for the emerged $Geo/Geo/1$ queue is obtained by employing the global balance equations \cite{robertazzi2007}, which yield 
\beae
{\pi}_{i,0}&=&\dfrac{1-q_i}{-q_1+\mathlarger{\mathlarger{‎‎\sum}}_{m=0}^{\infty}\left(\dfrac{p_i(1-q_i)}{q_i(1-p_i)}\right)^m}=\dfrac{q_i-p_i}{q_i}, \ i\in\{A,B\},\nms \label{eq:stationary_dis0}\\
{\pi}_{i,j}&=&\left(\dfrac{p_i(1-q_i)}{q_i(1-p_i)}\right)^{j}\dfrac{1}{1-q_i}\pi_{i,0}, \ j\geq 1, \ i\in\{A,B\}.
\eeae
Therefore, the stationary distribution is non-zero, only if 
\begin{equation}
\dfrac{p_i(1-q_i)}{q_i(1-p_i)}<1,  \ \ \forall i\in\{A,B\},
\end{equation}
or equivalently
\begin{equation}
 q_i>p_i,  \ \ \forall i\in\{A,B\}. \label{eq:stability_condition1}
\end{equation}
Otherwise, ${{{‎‎\sum}}_{i=0}^{\infty}({p_i(1-q_i)}/{q_i(1-p_i)})}$ would be infinite and all $\{{\pi}_{i,j}, \ j\geq0\}$ would be zero. By substituting the average arrival rate, $\lambda_i=p_i$, and average departure rate $\mu_i=q_i=\omega_i(1-P_{e,i}(k_i,n))$, in \eqref{eq:stability_condition1}, we obtain the following stability condition
\begin{equation}
 {\lambda_i}<{\omega_i(1-P_{e,i}(k_i,n))}\tri\omega_i P_{c,i}(k,n),  \ \ \forall i\in\{A,B\}. \label{eq:stability_condition2}
\end{equation}
Summing over all $i\in\{A,B\}$, we obtain \eqref{eq:throughputeqd0}, which completes the proof.

\subsection{Proof of Proposition~\ref{prop:lin}.}
\label{app_prop:lin}

 The parameters $\{\delta_0,\delta_1\}\in\mathbb{R}$, are evaluated by minimizing  the integral of the absolute error, that is
\bea
\big\{\delta_0^*,\delta_1^* \big\}=\argmin_{\delta_0,\delta_1}\int_{-\infty}^{\infty}\Big{|}  {\hat{P}_c}(k,n)-P_c(k,n)\Big{|}
d{\chi},\eea
which results to $\delta_0=0.5$ and $\delta_1=1.545$. Then, by employing the above approximation, the optimization problem is given by
\bea
{\hat u}(k,n)=\max_{k}\dfrac{k}{n}{\hat P}_c(k,n)=\max_{k}
  \begin{cases}
    \dfrac{k}{n} & \text{if $\chi \geq \delta_1$}, \\
    \dfrac{k}{n}\left(\dfrac{1}{2\delta_1}\chi+\delta_0\right)& \text{if $-\delta_1 \leq \chi < \delta_1$}. \label{eq:linappopt}
  \end{cases} 
\eea 
We first perform the optimization in the region $-\delta_1 \leq \chi < \delta_1$.   By substituting $\chi$ and the value of $\delta_1$, we rewrite  the predefined region as a function of $k$, that is
\begin{equation}
nC-1.545\sqrt{nV}< k \leq nC+1.545\sqrt{nV}.\label{eq:optreg1lin}
\end{equation}
The optimization problem is solved by differentiating the objective function, ${\hat u}(k,n)$, with respect to $k$, and verifying that the second derivative is negative. The optimal value of $k$ is then given by
\begin{equation}
{\hat k}^{*}=0.5\left(Cn+1.545\sqrt{nV}\right). \label{eq:optsollin}
\end{equation}
The analytical calculations are omitted due to space limitations. Since the value of $k$ must lay in the region defined by \eqref{eq:optreg1lin},  the optimal value of $k$ is valid only if 
\begin{equation}
nC-1.545\sqrt{nV}< {\hat k}^{*} \leq nC+1.545\sqrt{nV}.\label{eq:optreg1lin1}
\end{equation}
By substituting \eqref{eq:optsollin} in  \eqref{eq:optreg1lin1} and solving with respect to the blocklength $n$, we obtain the region of $n$ for which the optimal solution given by \eqref{eq:optsollin} holds, which yields $0\leq n<{13.905V}/{C^2}$. For the region $\chi\geq\delta_1$, or equivalently for $k\leq nC-1.545\sqrt{nV}$, the maximization of $k/n$ with respect to $k$, occurs on the boundary, that is,  $k= nC-1.545\sqrt{nV}$, and this solution holds for $n\geq{13.905V}/{C^2}$. The optimal throughput is then obtained by substituting the optimal size of the data packet, ${\hat k}^{*}$, in \eqref{optim_pro1}.

\subsection{Proof of Proposition~\ref{prop:quad}.}
\label{app_prop:quad}

Since, (i) the approximation given by \eqref{eq:sec_ord_app} is odd-symmetric with respect to $\chi=0$, and (ii) $P_c(k,n)\Big{|}_{\chi=0}=0.5$, then the optimal value of $\theta_0$ that minimizes the absolute value of the error between ${\tilde{P}_c}(k,n) $ and ${{P}_c}(k,n)$ is, $\theta_0=0.5$.

Next, we evaluate the parameters $\{\theta_1,\theta_2\}$, by imposing an additional constraint regarding the continuity of the first derivative with respect to $k$, which significantly  simplifies the optimization problem. The proposed quadratic form guarantees continuity in the region $\chi\in(-\theta_1,\theta_1)$. The conditions that ensure continuity of the first derivative in the regions $\chi\in(-\infty,-\theta_1]$ and  $\chi\in[\theta_1,\infty)$, and thus for the whole region $\chi\in(-\infty,\infty)$, are
\beae
&\dfrac{d}{d{k}}\left[{\tilde{P}_c(k,n)}\right]\Big{|}_{\chi=\theta_1}=0,\label{eqc1new}\\ &\dfrac{d}{d{k}}\left[{\tilde{P}_c(k,n)}\right]\Big{|}_{\chi=-\theta_1}=0,\label{eqc2new}\\
&{\tilde{P}_c(k,n)\Big{|}_{\chi=\theta_1}=1},\label{eqc3new}\\
&{\tilde{P}_c(k,n)\Big{|}_{\chi=-\theta_1}=0}.\label{eqc4new}
\eeae
Equations \eqref{eqc1new} and \eqref{eqc2new} are satisfied directly by the proposed quadratic form, whereas equations \eqref{eqc3new} and \eqref{eqc4new} are satisfied, if and only if, $\theta_2=0.5/{\theta_1}^2$. The remaining parameter, $\theta_1$, is evaluated by minimizing  the integral of the absolute error 
\bea
\big\{\theta_1^* \big\}=\argmin_{\theta_1}\int_{-\infty}^{\infty}\Big{|}  {\tilde{P}_c}(k,n)-P_c(k,n)\Big{|}d{\chi},
\eea
which yields $\theta_1=2.35$.
The optimization problem for the case of the quadratic approximation is 
\bea
\tilde{u}(k,n)=\max_{k}\dfrac{{k}}{n}{\tilde{P}}_c(k,n),\label{eq:quadappopt}
\eea 
where ${\tilde{P}}_c(k,n)$ is given by \eqref{eq:sec_ord_app}. By employing the methodology discussed in Appendix \ref{app_prop:lin}, we obtain the optimal data packet size given by \eqref{eq:finaloptsolquad}. Then, the optimal throughput, ${\tilde u}^{*}(k,n)$, emerges by substituting the values of \eqref{eq:sec_ord_app} and \eqref{eq:finaloptsolquad}, in $\dfrac{\tilde{k}^{*}}{n}\tilde{P}^{*}_c(k,n)$.

\subsection{Proof of Theorem~\ref{theo:coopmarch}.}
\label{app_theo:coopmarch}
i) The proof for the first statement consists of two parts: characterizing the stability regions for the source terminals and the relay, and evaluating  the union of the predefined regions. Regarding the source terminals, the arrival rate at each source terminal is $\lambda_i, \  i \in\{A,B\}$, whereas the departure (service) rate is given by
\begin{equation}
{\mu_i}=\omega_i\left[P_{c,SD}(k,n)+P_{e,SD}(k,n)P_{c,SR}(k,n)\right], \ \ i\in\{A,B\}. \label{eq_ccsource1}
\end{equation}
The system at each source terminal $i$ forms a discrete-time Markov chain with stability condition $\frac{\lambda_i}{\mu_i}<1, \ \forall i\in\{A,B\}$ \cite{alfa}, or equivalently
\begin{equation}
{\lambda_i}<\omega_i\left[P_{c,SD}(k,n)+P_{e,SD}(k,n)P_{c,SR}(k,n)\right], \ \forall i\in\{A,B\}. \label{eq_ccsource2}
\end{equation}
By recalling that $\omega_A+\omega_B=1$, the summation  \eqref{eq_ccsource2} over all source terminals $i\in\{A,B\}$ yields the following overall stability condition for the source terminals.
\beae
{\Lambda}^{CC}_{S}=\Big\{(\lambda_A, \lambda_B): 
  ({\lambda_A}+ {\lambda_B}) <\left[P_{c,SD}(k,n)+P_{e,SD}(k,n)P_{c,SR}(k,n)\right]
\Big\}. \label{eq_ccsource3}
\eeae
Regarding the relay, a packet from the source terminal $i, \ i\in\{A,B\}$, enters queue $Q_{iR}$ at the relay if i) channel access for the source terminal $i$ is granted by the 
randomized switch (with probability $\omega_i$),  ii) the transmission from the source terminal $i$ to the  relay is successful, iii) the transmission from the source terminal $i$ to the destination is unsuccessful, and iv) the queue of the source terminal $i$ is not idle, that is, it has at least one packet that requires transmission. The source is not idle with stationary probability $1-\pi_{i,0}$, where $\pi_{i,0}=1-\frac{\lambda_i}{\mu_i}$ \cite{alfa}. Thus, the rate of arrivals at relay's queue $Q_{iR}, \ i\in\{A,B\}$, is given by
\begin{equation}
\lambda_{iR}=\omega_i(1-\pi_{i,0})P_{e,SD}(k,n) P_{c,SR}(k,n), \ \ i\in\{A,B\}.  \label{eq_ccsource5}
\end{equation}
Similarly, the average rate of departures from the relay's queue $Q_{iR}, \ i\in\{A,B\}$, is given by
\begin{equation}
\mu_{iR}=\omega_i\pi_{i,0}P_{c,RD}(k,n), \ \ i\in\{A,B\}.  \label{eq_ccsource6}
\end{equation}
The stability condition for the individual queue $Q_{iR}, \ i\in\{A,B\}$ at relay, is $\frac{\lambda_{iR}}{\mu_{iR}}<1, \  \forall i\in\{A,B\}$. By employing \eqref{eq_ccsource1}, \eqref{eq_ccsource5} and \eqref{eq_ccsource6}, the stability condition, $\forall i\in\{A,B\}$, translates to
\beae
\lambda_{iR}<\frac{\omega_{i}\left[P_{c,SD}(k,n)+P_{e,SD}(k,n)P_{c,SR}(k,n)\right]P_{c,RD}(k,n) }{\left[P_{c,RD}(k,n)+P_{e,SD}(k,n)P_{c,SR}(k,n)\right]}, \nms \label{eq_ccsource7}
\eeae
whereas the overall stability region, obtained by summing \eqref{eq_ccsource7} over all $i\in\{A,B\}$, is given by 
\beae
{\Lambda}^{CC}_{R}=\Bigg\{(\lambda_A, \lambda_B): ({\lambda_A}+ {\lambda_B})<\dfrac{\left[P_{c,SD}(k,n)+P_{e,SD}(k,n)P_{c,SR}(k,n)\right]P_{c,RD}(k,n) }{\left[P_{c,RD}(k,n)+P_{e,SD}(k,n)P_{c,SR}(k,n)\right]}\Bigg\}.\nms
\label{eq_ccsource12}
\eeae
The overall stability region is given by the union of \eqref{eq_ccsource3} and \eqref{eq_ccsource12}. However, by comparing \eqref{eq_ccsource3} and \eqref{eq_ccsource12} we observe that ${\Lambda}^{CC}_{R} \subseteq	{\Lambda}^{CC}_{S}$, thus, the overall stability region ${\Lambda}^{CC}={\Lambda}^{CC}_{R}$.

ii) The second statement is obtained by  maximizing \eqref{eq_ccsource10} with respect to $k$.

\subsection{Proof of Theorem~\ref{theo:baf}.}
\label{app_theo:baf}
%

i)  The stability region for the source terminals is identical to the cognitive cooperation case, thus is  given by \eqref{eqthe2}. Similarly, the arrival rate at the relay terminal, is also identical with the  arrival rate of the cognitive cooperation scheme, and is given by \eqref{eq_ccsource5}.

Since $L$ packets are batched together, the total size of the data packets that are encoded  into a codeword of length $n$, is $Lk$. The departures from the  queue $Q_{iR}, i\in\{A,B\}$ at the relay, are also Bernoulli distributed, with departure probability, at a given time slot, $q_{iR}=\omega_i(\pi_{i,0})P_{c,RD}(Lk,n), \ i\in\{A,B\}$. The average departure rate is therefore given by 
\begin{eqnarray}
 \mu_{iR}={L}q_{iR}={L}\omega_i(\pi_{i,0})P_{c,RD}(Lk,n), \ i\in\{A,B\}. \label{appeq3}
\end{eqnarray}
The appropriate model at each queue of the relay is the $Geo/Geo^{L}/1$ model \cite{alfa}. In this model both the arrivals and the departures to and from the relay are Bernoulli distributed, and packets depart in batches of size $L$. 

The stability condition for $Geo/Geo^{L}/1$ queue \cite{alfa}, is given by 
\begin{eqnarray}
\frac{\lambda_{iR}}{ \mu_{iR}}<L, \ \ \forall i\in\{A,B\}. \label{appeq4}
 \end{eqnarray}
By substituting  \eqref{eq_ccsource5} and \eqref{appeq3} in \eqref{appeq4}, and by summing over $i\in\{A,B\}$, we obtain \eqref{eqthe3}.
Then, the stability of the overall network is given by the union of  \eqref{eqthe2} and \eqref{eqthe3}, which yields \eqref{eqthe1}.

ii) The maximum throughput of the system is obtained by maximizing \eqref{eqthe1} with respect to $k$ and $L$.
\end{appendix}

\bibliographystyle{IEEEtran}
\bibliography{Bibliography}

\end{document}